\documentclass[11pt,a4paper]{article}

\usepackage[T1]{fontenc}
\usepackage[utf8]{inputenc}
\usepackage{lmodern}
\usepackage[margin=26mm]{geometry}
\usepackage{amsmath,amssymb,amsthm,mathtools}
\usepackage{booktabs,array}
\usepackage{tabularx}
\usepackage{microtype}
\usepackage[table]{xcolor}
\usepackage{arydshln}
\usepackage[hidelinks]{hyperref}
\usepackage{enumitem}
\usepackage{placeins}
\usepackage{needspace}
\usepackage{tikz}
\usetikzlibrary{arrows.meta}

\setlist{nosep,leftmargin=1.5em}
\allowdisplaybreaks

\newcommand{\E}{\mathbb{E}}
\newcommand{\N}{\mathcal N}
\newcommand{\1}{\mathbf 1}
\newcommand{\VCG}{\mathrm{VCG}}
\newcommand{\AGV}{\mathrm{AGV}}
\newcommand{\GUVCG}{\mathrm{GU\text{-}VCG}}
\newcommand{\GUM}{\mathrm{GUM}}
\DeclareMathOperator*{\argmax}{arg\,max}
\newtheorem{proposition}{Proposition}
\newtheorem{conjecture}{Conjecture}

\definecolor{cblue}{RGB}{0,0,210}
\definecolor{dpurple}{RGB}{180,0,180}
\definecolor{llgreen}{RGB}{222,244,222}
\definecolor{llred}{RGB}{250,224,224}
\definecolor{cgreen}{RGB}{0,125,0}
\definecolor{cred}{RGB}{190,0,0}
\definecolor{guegreen}{RGB}{32,120,70}
\definecolor{guered}{RGB}{170,40,40}
\definecolor{dsebg}{RGB}{232,248,232}
\definecolor{bnebg}{RGB}{255,244,244}
\definecolor{guebg}{RGB}{184,242,184}

\newcommand{\featureon}[2]{\textcolor{cgreen}{\(\mathsf{#1}^{+}\)~\textbf{#2}}}
\newcommand{\featureoff}[2]{\textcolor{cred}{\(\mathsf{#1}^{-}\)~\textbf{#2}}}
\newcommand{\featureonmark}[1]{\textcolor{cgreen}{\(\mathsf{#1}^{+}\)}}
\newcommand{\featureoffmark}[1]{\textcolor{cred}{\(\mathsf{#1}^{-}\)}}
\newcommand{\yesprop}{\cellcolor{llgreen}\textcolor{cgreen}{\textbf{Yes}}}
\newcommand{\noprop}{\cellcolor{llred}\textcolor{cred}{\textbf{No}}}
\newcommand{\DSEmark}{\cellcolor{dsebg}\textcolor{guegreen}{\textbf{DSE}}}
\newcommand{\BNEmark}{\cellcolor{bnebg}\textcolor{black}{\textbf{BNE}}}
\newcommand{\PBEmark}{\cellcolor{bnebg}\textcolor{black}{\textbf{PBE}}}
\newcommand{\PEEmark}{\cellcolor{bnebg}\textcolor{black}{\scriptsize\textbf{periodic ex post}}}
\newcommand{\GUEmark}{\cellcolor{guebg}\textcolor{guegreen!60!black}{\textbf{GUE}}}
\newcommand{\rfit}[2]{\phantom{#1}\mathllap{#2}}
\newcommand{\tuplefit}[6]{(\rfit{#1}{#4},\,\rfit{#2}{#5},\,\rfit{#3}{#6})}
\newcommand{\tuplefitcompact}[6]{(\rfit{#1}{#4},\rfit{#2}{#5},\rfit{#3}{#6})}
\newcommand{\reporttriplet}[3]{\tuplefit{-6}{-6}{-6}{#1}{#2}{#3}}
\newcommand{\reporttripletpos}[3]{\tuplefit{10}{-6}{-6}{#1}{#2}{#3}}
\newcommand{\lefttransfertriplet}[3]{\tuplefit{-2.5}{-3.5}{-4}{#1}{#2}{#3}}
\newcommand{\righttransfertriplet}[3]{\tuplefit{-1.5}{-4.5}{-4}{#1}{#2}{#3}}
\newcommand{\pairfit}[4]{(\rfit{#1}{#3},\,\rfit{#2}{#4})}
\newcommand{\historyleft}[1]{\makebox[10.5em][l]{#1}}
\newcommand{\historycenter}[1]{\makebox[10.5em][c]{#1}}
\newcommand{\alignednum}[2]{\mathmakebox[1.25em][r]{#1}\mathmakebox[0.80em][l]{#2}}
\newcommand{\faintdottedrule}{\arrayrulecolor{black}\hdashline[0.8pt/1.8pt]\arrayrulecolor{black}}

\title{From the Social Choice Problem to a Collusion-Proof\\
Tendering Mechanism for Dynamic Stochastic Projects}
\author{Endre Cs\'oka\thanks{Alfréd Rényi Institute of Mathematics, Budapest, Hungary. This note draws on joint work with Heng Liu, Andr\'as Pongr\'acz, Alexander Rodivilov, Alexander Teytelboym, and M\'at\'e Viczi\'an, and has benefited from their contributions and assistance.
This work was supported by the NRDI grant KKP 138270.
}}
\date{}

\hypersetup{
  pdftitle={From the Social Choice Problem to a Collusion-Proof Tendering Mechanism for Dynamic Stochastic Projects},
  pdfauthor={Endre Csoka}
}

\begin{document}
\maketitle

\begin{abstract}
The VCG family \cite{Vickrey1961,Clarke1971,Groves1973} and the AGV mechanism \cite{Arrow1979,DAspremontGerardVaret1979} are two classical approaches to efficient implementation in the static social choice problem.
In 2024, Cs\'oka et al. \cite{CsokaEtAl2026} showed that AGV has critical weaknesses.
In contrast, the transferable-utility Guaranteed Utility Mechanism (TU-GUM) \cite{CsokaEtAl2026} retains all the standard desirable properties of AGV while adding further ones, including collusion-proofness, because it implements efficiency in Guaranteed Utility Equilibrium \cite{CsokaPongraczRodivilov2025}.
TU-GUM also applies to a more general dynamic setting with multiple extensions.
Moreover, TU-GUM is a special case of an even more general and robust mechanism \cite{Csoka2015} that combines contingent first-price tendering with the coordinated execution of dynamic stochastic multi-agent projects through a surprisingly simple rule.
This paper summarizes and connects existing results from a different perspective, with some minor new observations.
\end{abstract}

\section{The static social choice problem: VCG, AGV, and TU-GUM}

\paragraph{The static social choice problem.}
Let \(\N=\{1,\ldots,n\}\) be the set of agents.
The type profile \(\theta=(\theta_i)_{i\in\N}\) is drawn from a prior \(\mu\), and each realized type \(\theta_i\) is privately observed by agent \(i\).
The agents simultaneously report \(\widehat\theta_i\).
For a decision \(x\in X\), agent \(i\)'s payoff is \(u_i(\theta_i,x)\), and the transfer to agent \(i\) is denoted by \(y_i\), so his utility is \(u_i(\theta_i,x)+y_i\).
Throughout the paper, payoff excludes transfers, whereas utility includes them.
The objective is to implement an efficient social choice while providing incentives for truthful reporting.
A mechanism is \emph{budget-balanced} if \(\sum\limits_{i\in\N}y_i(\widehat\theta)=0\) for every report profile.

A mechanism is \emph{prior-free} if its rules do not depend on the prior \(\mu\).
Unless stated otherwise, for prior-dependent mechanisms we assume independent types, namely, \(\mu=\bigotimes_{i\in\N}\mu_i\).

\paragraph{Efficient decision policy and externality notation.}
Fix an efficient decision policy \(\chi\) satisfying
\[
  \chi(\widehat\theta)
  \in \argmax_{x\in X}\sum_{k\in\N}u_k(\widehat\theta_k,x)
\]
We study the implementation of this fixed policy.
To describe AGV and TU-GUM, we use the following notation.
Let \(\widetilde\theta\) be drawn from \(\mu\), independently of the realized type profile \(\theta\) and of the reports.
For \(B\subseteq\N\), write \(\bar B:=\N\setminus B\), and let \((\widehat\theta_B,\widetilde\theta_{\bar B})\) denote the combined profile. Define
\[
 V_j(\widehat\theta_B)
 :=\E_{\widetilde\theta_{\bar B}}
 \Bigl[u_j\bigl((\widehat\theta_B,\widetilde\theta_{\bar B})_j,
 \chi(\widehat\theta_B,\widetilde\theta_{\bar B})\bigr)\Bigr]
\]
If \(i\notin B\), the bilateral marginal externality of fixing agent \(i\)'s report, given that the reports in \(B\) have already been fixed, is
\begin{equation}
 \gamma^{i\to j}_B
 :=V_j\bigl(\widehat\theta_{B\cup\{i\}}\bigr)
   -V_j(\widehat\theta_B)
 \label{eq:gammaB}
\end{equation}
This is the change in agent \(j\)'s anticipated payoff caused by processing agent \(i\)'s report at that stage.

\paragraph{GUE implementation.}
Here and throughout this static section, expected-utility guarantees are \emph{ex ante}, before the type profile is realized.
A \emph{GUE implementation} of a socially efficient outcome means that, regardless of the other agents' strategies, truthful reporting gives each agent ex-ante expected utility at least as high as under the truthful (and hence efficient) strategy profile.

\subsection{The VCG family}

The VCG family associated with the efficient decision policy \(\chi\) consists of the Groves transfer rules \cite{Groves1973}:
\begin{equation}
 y_i^{\VCG,h}(\widehat\theta)
 :=\sum_{j\ne i}u_j\bigl(\widehat\theta_j,\chi(\widehat\theta)\bigr)
   +h_i(\widehat\theta_{-i})
 \label{eq:VCGgeneral}
\end{equation}
where \(h_i\) is an arbitrary function of the other agents' reports.
VCG is a prior-free family of dominant-strategy implementations of \(\chi\), none of which is budget-balanced.
Different choices of the functions \(h_i\) give different payment rules within the VCG family.
Appendix~\ref{app:guvcg} illustrates several natural choices, including a new variant that is also a GUE implementation under additional conditions.

\subsection{AGV}

AGV uses the expected externality generated by an agent's report \cite{Arrow1979,DAspremontGerardVaret1979}:
\begin{equation}
 G_i(\widehat\theta_i)
 :=\sum_{j\ne i}\gamma^{i\to j}_{\varnothing}
 =\sum_{j\ne i}\Bigl(V_j(\widehat\theta_i)-V_j(\varnothing)\Bigr)
 \label{eq:Gi}
\end{equation}
The AGV transfer is
\begin{equation}
 y_i^{\AGV}(\widehat\theta)
 :=G_i(\widehat\theta_i)
 -\frac{1}{n-1}\sum_{k\ne i}G_k(\widehat\theta_k)
 \label{eq:AGV}
\end{equation}
Thus agent \(i\) receives the expected externality caused by his report, while the corresponding amount is paid by the other agents in equal shares.
The transfers balance exactly: \(\sum_i y_i^{\AGV}=0\).
Truthful reporting is a Bayesian Nash equilibrium.

\subsection{TU-GUM}

We introduce \emph{sequential bilateral externalities} in the main paper \cite{CsokaEtAl2026} and construct TU-GUM from them.
In the present static setting, TU-GUM processes the agents in the fixed order \(1,2,\ldots,n\).
Let
\[
 B_i:=\{1,2,\ldots,i-1\}
\]
When \(i\)'s report is processed, its sequential bilateral externality on \(j\) is
\begin{equation}
 \gamma^{i\to j}(\widehat\theta)
 :=\gamma_{B_i}^{i\to j}
 \label{eq:seqgamma}
\end{equation}
Agent \(j\) pays this externality to agent \(i\).
Consequently, TU-GUM is budget-balanced, with net transfer to agent \(i\)
\begin{equation}
 y_i^{\GUM}(\widehat\theta)
 :=\sum_{j\ne i}
 (\gamma^{i\to j}-\gamma^{j\to i})
 \label{eq:seqGUM}
\end{equation}
Truthful reporting provides the guarantee in the definition above, so TU-GUM implements \(\chi\) in GUE.
Averaging its payment rule over all possible orders gives symmetrized TU-GUM, preserving budget balance and the same utility guarantees, and hence the GUE implementation property.

\subsection{Comparison of VCG, AGV, and TU-GUM}

AGV and TU-GUM differ in two respects.
First, AGV evaluates each externality before any other report has been fixed, with \(B=\varnothing\), whereas TU-GUM processes reports sequentially and evaluates each bilateral externality conditional on the reports already processed.
This difference is already present with two agents.
Second, AGV aggregates the externalities caused by an agent's report and charges the aggregate to the other agents in equal shares, whereas TU-GUM settles each bilateral externality directly between the two agents concerned.
This second difference disappears with two agents.

The central static comparison is summarized in Table~\ref{tab:mechanism-properties}.
DSE, BNE, and PBE denote dominant-strategy equilibrium, Bayesian Nash equilibrium, and perfect Bayesian equilibrium, respectively; IEWDS denotes iterated elimination of weakly dominated strategies.

\begin{table}[!htbp]
\centering
\caption{Main properties of the VCG family, AGV, and TU-GUM.}
\label{tab:mechanism-properties}
\small
\renewcommand{\arraystretch}{1.16}
\setlength{\tabcolsep}{3pt}
\begin{tabularx}{\textwidth}{@{}>{\raggedright\arraybackslash}X *{3}{>{\centering\arraybackslash}p{0.15\textwidth}}@{}}
\toprule
Property & VCG & AGV & TU-GUM\\
\midrule
\multicolumn{4}{c}{\cellcolor{gray!15}\textit{Efficiency and strategic robustness}}\\
Efficiency is implemented in & \DSEmark & \BNEmark & \GUEmark\\
Efficiency survives IEWDS & \yesprop & \noprop & \yesprop\\
All BNEs are efficient & \noprop & \noprop & \yesprop\\
Collusion-proof & \noprop & \noprop & \yesprop\\
\midrule
\multicolumn{4}{c}{\cellcolor{gray!15}\textit{Other properties}}\\
Balanced budget & \noprop & \yesprop & \yesprop\\
Prior-free & \yesprop & \noprop & \noprop\\
\bottomrule
\end{tabularx}
\end{table}
\FloatBarrier

On unrestricted quasilinear domains, the Groves rules in \eqref{eq:VCGgeneral} are exactly the transfer rules that implement the efficient decision policy \(\chi\) with truthful reporting as a dominant strategy \cite{GreenLaffont1977}.
The Green--Laffont impossibility theorem further shows that such implementation cannot in general be combined with exact budget balance \cite{GreenLaffont1979}.

Appendix~\ref{app:agvfragility} presents a social-choice setup in which exhaustive IEWDS under AGV eliminates every efficient strategy profile, independently of the elimination order.

For an inefficient BNE common to the VCG family, consider a Vickrey auction with \(\Theta_1=\{0,2\}\) and \(\Theta_2=\{1,3\}\), where agent 1 always reports \(0\) and agent 2 always reports \(3\).

The contrast becomes sharper in dynamic settings.
Static VCG implements efficiency in dominant strategies.
Its dynamic analogue, the Dynamic Pivot Mechanism (DPM) of Bergemann and V\"alim\"aki \cite{BergemannValimaki2010}, instead implements the efficient policy in what they call periodic ex post equilibrium.
This periodic ex post implementation is not robust to weak dominance: Proposition~1 of \cite{CsokaEtAl2026} gives a setting in which IEWDS eliminates every efficient strategy profile, including the truthful periodic ex post equilibrium, regardless of the elimination order.
In contrast, GUE implementation extends to the dynamic setting, and therefore so do all the important features of TU-GUM.

\begin{table}[!htbp]
\centering
\caption{Properties of DPM (dynamic VCG), BTM (dynamic AGV), and TU-GUM in the dynamic TU setting; adapted from \cite[Table~3]{CsokaEtAl2026}.}
\label{tab:dynamic-mechanism-properties}
\small
\renewcommand{\arraystretch}{1.13}
\setlength{\tabcolsep}{3pt}
\begin{tabularx}{\textwidth}{@{}>{\raggedright\arraybackslash}X *{3}{>{\centering\arraybackslash}p{0.16\textwidth}}@{}}
\toprule
Property & DPM & BTM & TU-GUM\\
\midrule
\multicolumn{4}{c}{\cellcolor{gray!15}\textit{Efficiency and strategic robustness}}\\
Efficiency is implemented in & \PEEmark & \PBEmark & \GUEmark\\
Efficiency survives IEWDS & \noprop & \noprop & \yesprop\\
All BNEs/PBEs are efficient & \noprop & \noprop & \yesprop\\
Collusion-proof & \noprop & \noprop & \yesprop\\
\midrule
\multicolumn{4}{c}{\cellcolor{gray!15}\textit{Other properties and extensions}}\\
Balanced budget & \noprop & \yesprop & \yesprop\\
Private actions are allowed & \noprop & \yesprop & \yesprop\\
Exit and re-entry are allowed & \yesprop & \noprop & \yesprop\\
Earlier types of other agents may be observed & \yesprop & \noprop & \yesprop\\
\bottomrule
\end{tabularx}
\end{table}
\FloatBarrier
\Needspace{8\baselineskip}
For two agents, however, AGV and its dynamic extension, the Balanced Team Mechanism (BTM), are also GUE implementations of efficiency, although their payment rules generally differ from TU-GUM.
More generally, every two-agent dynamic externality-payment mechanism is a GUE implementation; see Sections~4.2 and~4.4 of \cite{CsokaPongraczRodivilov2025}.

For two agents, if at most one agent receives a new private type in each period, TU-GUM coincides with BTM and therefore with AGV in the static case.
The two-period trading example of Athey and Segal has exactly this timing: the seller observes his type before the first period, whereas the buyer observes his type between the two periods \cite[Example~1]{AtheySegal2013}.
Consequently, BTM and TU-GUM coincide in that example \cite[Appendix~G]{CsokaEtAl2026}.

In a forthcoming paper, Cs\'oka and Viczi\'an introduce a variant of TU-GUM that coincides with AGV for two agents.
The main difference is that, among all GUE implementations of \(\chi\), the symmetrized TU-GUM minimizes the sum of the variances of the agents' utilities, whereas the variant minimizes the sum of the variances of the transfers.

\section{A three-agent social choice example}
\label{sec:three-agent-example}

There are three agents.
Independently,
\[
 \theta_i\in\{-6,10\}
 \qquad \Pr(\theta_i=-6)=\Pr(\theta_i=10)=1/2
\]
The decision is \(x\in\{0,1\}\), and
\[
 u_i(\theta_i,x)=x\theta_i
\]
The fixed efficient policy is
\[
 \chi(\widehat\theta)=
 \1\!\{\widehat\theta_1+\widehat\theta_2+\widehat\theta_3>0\}
\]
Thus \(x=0\) when at most one report equals 10, and \(x=1\) when at least two reports equal 10.
The efficient ex-ante expected total payoff is
\begin{equation}
 M=3/8 \cdot 14+1/8 \cdot 30=9
 \label{eq:M9}
\end{equation}
By symmetry, each agent's expected payoff under this policy is \(V_i(\varnothing)=3\).

Table~\ref{tab:mechanism-summary} summarizes the decision and transfer vectors for the symmetric VCG family, AGV, and symmetrized TU-GUM; permuting a report profile permutes the coordinates of the corresponding transfer vector.
\begin{table}[htbp]
\centering
\caption{Decisions and transfer vectors for the symmetric VCG family, AGV, and symmetrized TU-GUM.}
\label{tab:mechanism-summary}
\small
\renewcommand{\arraystretch}{1.12}
\begin{tabularx}{\textwidth}{@{}l *{4}{>{\centering\arraybackslash}X}@{}}
\toprule
Reports \(\widehat\theta_{\N}\)
 & \(\textcolor{cblue}{\tuplefit{-6}{-6}{-6}{-6}{-6}{-6}}\)
 & \(\textcolor{cblue}{\tuplefit{-6}{-6}{10}{-6}{-6}{10}}\)
 & \(\textcolor{cblue}{\tuplefitcompact{-6}{b+4}{b+4}{-6}{10}{10}}\)
 & \(\textcolor{cblue}{\tuplefit{10}{10}{10}{10}{10}{10}}\)\\
\midrule
Decision
 & \(\textcolor{dpurple}{x=0}\)
 & \(\textcolor{dpurple}{x=0}\)
 & \(\textcolor{dpurple}{x=1}\)
 & \(\textcolor{dpurple}{x=1}\)\\
\specialrule{\lightrulewidth}{0.35em}{0pt}
sym.\ VCG
 & \(\tuplefit{-6}{-6}{-6}{a}{a}{a}\)
 & \(\tuplefit{-6}{-6}{10}{b}{b}{a}\)
 & \(\tuplefitcompact{-6}{b+4}{b+4}{c}{b+4}{b+4}\)
 & \(\tuplefit{10}{10}{10}{c}{c}{c}\)\\
\addlinespace[0.10em]
AGV & \(\tuplefit{-6}{-6}{-6}{0}{0}{0}\) & \(\tuplefit{-6}{-6}{10}{-1}{-1}{2}\) & \(\tuplefitcompact{-6}{b+4}{b+4}{-2}{1}{1}\) & \(\tuplefit{10}{10}{10}{0}{0}{0}\)\\
\rowcolor{llgreen}
sym.\ TU-GUM
 & \(\tuplefit{-6}{-6}{-6}{0}{0}{0}\) & \(\tuplefit{-6}{-6}{10}{-3}{-3}{6}\) & \(\tuplefitcompact{-6}{b+4}{b+4}{2}{-1}{-1}\) & \(\tuplefit{10}{10}{10}{0}{0}{0}\)\\
\bottomrule
\end{tabularx}
\end{table}

The VCG row gives the full symmetric VCG family in this example.
Appendix~\ref{app:guvcg} discusses particular VCG rules and introduces GU-VCG.

\subsection{VCG calculation}

For a report profile \(\widehat\theta\), write
\[
 W_{-i}(\widehat\theta)
 :=\sum_{j\ne i}u_j\bigl(\widehat\theta_j,\chi(\widehat\theta)\bigr)
\]
Equation~\eqref{eq:VCGgeneral} gives
\[
 y_i^{\VCG,h}(\widehat\theta)
 =W_{-i}(\widehat\theta)+h_i(\widehat\theta_{-i})
\]
In this example, \(W_{-i}(\widehat\theta)=0\) when \(\chi(\widehat\theta)=0\), and
\(W_{-i}(\widehat\theta)=\sum_{j\ne i}\widehat\theta_j\) when \(\chi(\widehat\theta)=1\).
Consequently, for every agent \(i\),
\[
 y_i^{\VCG,h}(10,\widehat\theta_{-i})
 =y_i^{\VCG,h}(-6,\widehat\theta_{-i})+4
 \qquad
 \text{if }\widehat\theta_{-i}\in\{(-6,10),(10,-6)\},
\]
while
\[
 y_i^{\VCG,h}(10,\widehat\theta_{-i})
 =y_i^{\VCG,h}(-6,\widehat\theta_{-i})
 \qquad
 \text{if }\widehat\theta_{-i}\in\{(-6,-6),(10,10)\}.
\]
The VCG rules that are symmetric across agents are therefore exactly the three-parameter family displayed in Table~\ref{tab:mechanism-summary}.

\subsection{AGV calculation}

Unconditionally, each agent's expected payoff under this policy is 3, so the baseline expected payoff of the two agents other than \(i\) is 6.
If \(i\) reports \(-6\), the decision is 1 only when both other reports equal 10.
Their expected total payoff is therefore
\[
 1/4 \cdot (10+10)=5
 \qquad G_i(-6)=5-6=-1
\]
If \(i\) reports \(10\), the other agents' expected total payoff is
\[
 1/4 \cdot 0+1/4 \cdot 4+1/4 \cdot 4
 +1/4 \cdot 20=7
 \qquad G_i(10)=7-6=1
\]
Since \(n=3\), equation~\eqref{eq:AGV} gives
\[
 y_i^{\AGV}=G_i(\widehat\theta_i)
 -0.5\sum_{k\ne i}G_k(\widehat\theta_k)
\]
For instance,
\[
 y^{\AGV}(-6,-6,10)=
 \Bigl(-1-0.5\cdot(-1+1),
       -1-0.5\cdot(-1+1),
        1-0.5\cdot(-1-1)\Bigr)=(-1,-1,2)
\]

\subsection{TU-GUM: fixed-order and symmetrized versions}

We use the fixed order \(1,2,3\).
Thus
\[
 B_1=\varnothing,\qquad
 B_2=\{1\},\qquad
 B_3=\{1,2\}
\]

\medskip
\noindent\textbf{A direct demonstration of the bilateral payments.}\par
Consider the report profile \((-6,-6,10)\).
Before any report is processed, the anticipated payoff of each agent is
\[
 3/8 \cdot 10-1/8 \cdot 6=3
\]
First process \(\widehat\theta_1=-6\).
The anticipated payoff of each of agents 2 and 3 becomes
\[
 1/4 \cdot 10=2.5
\]
Hence \(\gamma^{1\to2}=\gamma^{1\to3}=-0.5\): agent 1 pays compensation of \(0.5\) to each of agents 2 and 3.

Next process \(\widehat\theta_2=-6\).
Agent 1's anticipated payoff rises from \(1/4 \cdot (-6)=-1.5\) to 0.
Thus \(\gamma^{2\to1}=1.5\), so agent 1 pays \(1.5\) to agent 2.
Agent 3's anticipated payoff falls from \(1/4 \cdot 10=2.5\) to 0.
Thus \(\gamma^{2\to3}=-2.5\), so agent 2 pays \(2.5\) to agent 3.

Finally, processing \(\widehat\theta_3=10\) does not change the decision or either earlier agent's anticipated payoff, so \(\gamma^{3\to1}=\gamma^{3\to2}=0\).
The resulting net transfers are therefore
\[
 y^{\GUM}(-6,-6,10)
 =(-0.5-0.5-1.5,\;0.5+1.5-2.5,\;0.5+2.5)
 =(-2.5,-0.5,3)
\]
This is the simple meaning of the rule: whenever a report changes another agent's anticipated payoff, the reporting agent is paid that change by the affected agent.

The following table gives all six sequential bilateral externalities for every report profile, obtained directly from \eqref{eq:gammaB}.

\begin{table}[htbp]
\centering
\caption{Sequential bilateral externalities for the fixed order \(1,2,3\).}
\label{tab:sequential-externalities}
\small
\begin{tabular*}{\textwidth}{@{\extracolsep{\fill}}c c c c c c c@{}}
\toprule
Reports & \(\gamma^{1\to2}\) & \(\gamma^{1\to3}\)
& \(\gamma^{2\to1}\) & \(\gamma^{2\to3}\)
& \(\gamma^{3\to1}\) & \(\gamma^{3\to2}\)\\
\midrule
\(\reporttriplet{-6}{-6}{-6}\) & \(\alignednum{-0}{.5}\) & \(\alignednum{-0}{.5}\) & \(\alignednum{1}{.5}\) & \(\alignednum{-2}{.5}\) & \(\alignednum{0}{}\) & \(\alignednum{0}{}\)\\
\(\reporttriplet{-6}{-6}{10}\) & \(\alignednum{-0}{.5}\) & \(\alignednum{-0}{.5}\) & \(\alignednum{1}{.5}\) & \(\alignednum{-2}{.5}\) & \(\alignednum{0}{}\) & \(\alignednum{0}{}\)\\
\(\reporttriplet{-6}{10}{-6}\) & \(\alignednum{-0}{.5}\) & \(\alignednum{-0}{.5}\) & \(\alignednum{-1}{.5}\) & \(\alignednum{2}{.5}\) & \(\alignednum{3}{}\) & \(\alignednum{-5}{}\)\\
\(\reporttriplet{-6}{10}{10}\) & \(\alignednum{-0}{.5}\) & \(\alignednum{-0}{.5}\) & \(\alignednum{-1}{.5}\) & \(\alignednum{2}{.5}\) & \(\alignednum{-3}{}\) & \(\alignednum{5}{}\)\\
\(\reporttriplet{10}{-6}{-6}\) & \(\alignednum{0}{.5}\) & \(\alignednum{0}{.5}\) & \(\alignednum{-2}{.5}\) & \(\alignednum{1}{.5}\) & \(\alignednum{-5}{}\) & \(\alignednum{3}{}\)\\
\(\reporttriplet{10}{-6}{10}\) & \(\alignednum{0}{.5}\) & \(\alignednum{0}{.5}\) & \(\alignednum{-2}{.5}\) & \(\alignednum{1}{.5}\) & \(\alignednum{5}{}\) & \(\alignednum{-3}{}\)\\
\(\reporttriplet{10}{10}{-6}\) & \(\alignednum{0}{.5}\) & \(\alignednum{0}{.5}\) & \(\alignednum{2}{.5}\) & \(\alignednum{-1}{.5}\) & \(\alignednum{0}{}\) & \(\alignednum{0}{}\)\\
\(\reporttriplet{10}{10}{10}\) & \(\alignednum{0}{.5}\) & \(\alignednum{0}{.5}\) & \(\alignednum{2}{.5}\) & \(\alignednum{-1}{.5}\) & \(\alignednum{0}{}\) & \(\alignednum{0}{}\)\\
\bottomrule
\end{tabular*}
\end{table}

Taking the net bilateral payments in \eqref{eq:seqGUM} gives the following.
\begin{table}[htbp]
\centering
\caption{Decisions and transfer vectors under fixed-order TU-GUM.}
\label{tab:fixed-order-transfers}
\small
\setlength{\tabcolsep}{3pt}
\begin{tabular*}{\textwidth}{@{\extracolsep{\fill}}c c c c c c@{}}
\toprule
Reports & Decision & \(y^{\GUM}\)
& Reports & Decision & \(y^{\GUM}\)\\
\midrule
\(\reporttriplet{-6}{-6}{-6}\) & 0 & \(\lefttransfertriplet{-2.5}{-0.5}{3}\)
& \(\reporttripletpos{10}{-6}{-6}\) & 0 & \(\righttransfertriplet{8.5}{-4.5}{-4}\)\\
\(\reporttriplet{-6}{-6}{10}\) & 0 & \(\lefttransfertriplet{-2.5}{-0.5}{3}\)
& \(\reporttripletpos{10}{-6}{10}\) & 1 & \(\righttransfertriplet{-1.5}{1.5}{0}\)\\
\(\reporttriplet{-6}{10}{-6}\) & 0 & \(\lefttransfertriplet{-2.5}{6.5}{-4}\)
& \(\reporttripletpos{10}{10}{-6}\) & 1 & \(\righttransfertriplet{-1.5}{0.5}{1}\)\\
\(\reporttriplet{-6}{10}{10}\) & 1 & \(\lefttransfertriplet{3.5}{-3.5}{0}\)
& \(\reporttripletpos{10}{10}{10}\) & 1 & \(\righttransfertriplet{-1.5}{0.5}{1}\)\\
\bottomrule
\end{tabular*}
\end{table}
\FloatBarrier
Every displayed transfer vector sums to zero.
The fixed order makes the individual transfers asymmetric, but, for every fixed report profile of the other agents, averaging a truthful agent's utility over his own two equiprobable types in Table~\ref{tab:fixed-order-transfers} gives \(3\).
Because types are independent and reports are simultaneous, truthful reporting therefore guarantees ex-ante expected utility \(3\) against every strategy profile of the other agents in this game.

Averaging the TU-GUM payment rule over all orders of the three agents gives symmetrized TU-GUM.
Collecting the calculations and using symmetry gives its row in Table~\ref{tab:mechanism-summary}.

\subsection{Utility guarantees and their consequences}

Table~\ref{tab:mechanism-summary} makes the symmetrized TU-GUM guarantee especially transparent.
Up to permutation of the two other reports, the utility of a truthful agent averaged over his own type is
\begin{align*}
 (10,10):&\quad \frac{(-6+2)+(10+0)}2=3\\
 (-6,10):&\quad \frac{(0-3)+(10-1)}2=3\\
 (-6,-6):&\quad \frac{(0+0)+(0+6)}2=3
\end{align*}
The same guarantee was already verified above for the fixed-order TU-GUM.
Hence both fixed-order and symmetrized TU-GUM implement the efficient outcome in GUE.
Moreover, under truthful reporting, \(\chi\) selects the efficient decision and transfers balance, so the agents' expected total utility is the maximal expected total payoff, \(M=9\), by \eqref{eq:M9}.
Thus the three guarantees exhaust feasibility:
\[
 M=9=3+3+3
\]

This calculation also shows how TU-GUM creates these guarantees.
At the level of externality accounting, VCG and AGV build on Vickrey's idea that the reporting agent should internalize the externality caused by his report.
TU-GUM adds exact compensation of the agent affected by each externality.
Sequential bilateral settlement therefore both gives the reporting agent the appropriate incentive and protects the affected agent's anticipated utility, making GUE implementation possible.

For comparison, Table~\ref{tab:conditional-utilities} gives the corresponding own-type averages under AGV and TU-GUM.
\begin{table}[htbp]
\centering
\caption{Utility of a truthful agent, averaged over his own type, for fixed reports of the others.}
\label{tab:conditional-utilities}
\small
\begin{tabular*}{\textwidth}{@{\extracolsep{\fill}}c c c c@{}}
\toprule
Mechanism & \((10,10)\) & \((-6,10)\) or \((10,-6)\) & \((-6,-6)\)\\
\midrule
AGV & 1 & 5 & 1\\
TU-GUM & 3 & 3 & 3\\
\bottomrule
\end{tabular*}
\end{table}
\FloatBarrier
Both displayed rules give each truthful agent ex-ante expected utility 3, but only TU-GUM makes this level a guarantee against arbitrary behavior of the other agents.
AGV and both versions of TU-GUM are budget-balanced.

\paragraph{Coalition deviations in the example.}
The same transfer table makes the contrast between AGV and TU-GUM concrete.
Under AGV, if agents 1 and 2 have types \((-6,10)\), they can jointly report \((10,10)\). Their aggregate utility then rises from \(1\) to \(6\) when agent 3 reports \(-6\), and from \(3\) to \(4\) when agent 3 reports \(10\); transferable side payments can make both coalition members better off.\footnote{An even stronger example appears in \cite[Section~2]{CsokaEtAl2026}: both coalition members are strictly better off without side transfers, and the resulting strategy profile is both a strict Nash equilibrium and a super strong Nash equilibrium, despite inducing a socially inefficient outcome.}
In contrast, under either version of TU-GUM, no coalition can raise the expected total utility of its members by a joint report deviation when the outsiders remain truthful: the outsiders retain their guarantees, while expected total utility is already maximal.
This is the transferable-utility collusion-proofness implied by the GUE property.

\section{Generalizations of TU-GUM}

The sequential-externality construction extends beyond the static social choice environment \cite{CsokaEtAl2026}:
\begin{itemize}
 \item \textbf{Dynamic games.}
 The sequential bilateral externality of a report includes its effect on the other agent's anticipated utilities in future periods as well as in the current period.
 \item \textbf{Private actions.}
 The efficient policy may recommend private actions in addition to public decisions; continuation values and sequential externalities are calculated under this enlarged policy.
 \item \textbf{Contractible interdependencies between working processes.}
 An agent's working process may generate contractible consequences that directly affect other agents' working processes and the outcome of the project.
 The efficient continuation policy and the sequential externalities may condition on these consequences even when the underlying actions, chance events, and costs remain private \cite{Csoka2015}.
\end{itemize}
All three extensions still yield exact GUE implementations in the TU setting when the participating set and initial type profile are fixed as part of the mechanism specification.
Two further directions are worth highlighting:
\begin{itemize}
 \item \textbf{Transfer-free mechanisms.}
 A variant of TU-GUM also applies without monetary transfers, giving an \(\varepsilon\)-GUE implementation with vanishing per-period error \cite{CsokaEtAl2026}.
 \item \textbf{Non-fixed initial types.}
 There are two prior-free extensions beyond the assumption that the initial type profile is fixed as part of the mechanism specification.
 One is transfer-free \cite{CsokaPriorFree2025}; the other introduces an auction stage in which privately informed agents submit contingent offers \cite{Csoka2015}.
 The former gives particularly clean and mathematically tractable results, while the latter may be more important for practical applications and leads to the Project Management Model and Mechanism discussed next.
\end{itemize}

\section{The Project Management Model and Mechanism}
\label{sec:pmm}

The paper \emph{\href{https://arxiv.org/abs/cs/0602009}{Efficient Teamwork}} \cite{Csoka2015} introduced a still more general, prior-free model with adverse selection. Each agent privately observes an initial type that specifies his stochastic dynamic working process. The planner knows neither these types nor any prior distribution over them, and the model imposes no independence assumption on the agents' initial types. The goal is to select the best team for the project and get its members to work together efficiently.

A mechanism defined by a single sentence offers a reasonably effective solution to this general problem. It combines the GUE technique with the arguably attractive and robust properties of first-price tendering for team selection.
We next show how TU-GUM arises as a very special case of the general mechanism.

\subsection{The simplified Project Management Model}

The simplified Project Management Model has the following elements.
\begin{itemize}
 \item \textbf{Private work processes.}
 Each candidate agent \(i\in\N\) privately observes an initial type \(\theta_i\).
 This type specifies a stochastic dynamic working process, including feasible private actions, privately observed chance events and their probabilities, and the resulting contractible output \(r_i\in\mathcal R_i\).
 At each chance node, the next state is generated according to the probabilities specified by that working process.
 Agent \(i\)'s complete realized private work history is denoted by \(\xi_i\).
 Time is absolute: every potential decision opportunity and chance event, across all agents, has a fixed position on a common timeline.

 \item \textbf{Team selection.}
 The principal selects a team \(A\subseteq\N\); its members are called the accepted agents.
 Their working processes then run on the common time scale.
 Candidates outside \(A\) are called the rejected agents; they receive zero utility and exit the game.

 \item \textbf{Contractible results and social choice.}
 By the end of the project, a contractible result vector
 \[
   r=(r_0,r_A)
   \in \mathcal R_0\times\prod_{i\in A}\mathcal R_i
 \]
 is available, where \(r_0\) is the social choice selected by the mechanism and
 \(r_A=(r_i)_{i\in A}\) collects the contractible results associated with the accepted agents' working processes.

 \item \textbf{Payoffs.}
 For a selected team \(A\) and result vector \(r=(r_0,r_A)\), the payoff of agent \(i\in A\) is
 \[
   u_i(\xi_i,r)
 \]
 while the principal's payoff is
 \[
   u_0(r)
 \]
 The familiar private-cost formulation is the special case
 \(u_i(\xi_i,r)=-c_i(\xi_i)\).

 \item \textbf{Communication and transfers.}
 Communication is available throughout the project.
 Messages are contractible, although the information they report need not be verifiable.
 If \(m\) is the complete communication history, the signed transfer from the principal to agent \(i\in A\) may be any prescribed function \(t_i(m,r)\).

 \item \textbf{Utilities.}
 Write \(y_i=t_i(m,r)\) and \(y_0=-\sum_{i\in A}y_i\).
 Each accepted agent \(i\) has a Bernoulli utility function
 \[
   \bar u_i(\xi_i,r,y_i)
 \]
 and the principal has a Bernoulli utility function
 \[
   \bar u_0(r,y_0)
 \]
 Under a contingent plan, their utilities are
 \[
   U_i=\mathbb E\!\left[\bar u_i(\xi_i,r,y_i)\right]
   \quad(i\in A)
   \qquad
   U_0=\mathbb E\!\left[\bar u_0(r,y_0)\right]
 \]
 where the expectations are taken over the chance events and any randomization induced by the contingent plan and the stochastic working processes.
 The Bernoulli utility functions may be nonquasilinear, but the framework is intended for settings in which monetary transfers remain a meaningful means of compensation.

 \item \textbf{Quasilinear benchmark.}
 Under quasilinearity,
 \[
   U_i
   =\mathbb E\!\left[u_i(\xi_i,r)+y_i\right]
   \quad(i\in A)
   \qquad
   U_0
   =\mathbb E\!\left[u_0(r)+y_0\right]
 \]
 Hence total utility equals expected total payoff:
 \[
   U_0+\sum_{i\in A}U_i
   =\mathbb E\!\left[u_0(r)+\sum_{i\in A}u_i(\xi_i,r)\right]
 \]

 \item \textbf{Objectives.}
 Depending on the application, the objective may be to maximize the principal's utility or project efficiency.
 We focus primarily on settings in which failures of cooperation during execution can cause substantial losses, whereas sufficiently competitive tendering is expected to leave relatively little room for improvement through further optimization of the auction stage.

 \item \textbf{Idealized assumptions and robustness.}
 The idealized version uses the quasilinear benchmark above and assumes perfect informational separation: each player's newly realized private chance event is independent of all information previously available to the other players.
 The Project Management Mechanism and its guaranteed-utility logic apply to arbitrary vNM utility functions, allowing, in particular, risk aversion, and also accommodate limited departures from perfect informational separation.
 The explicit sell-the-firm and TU-GUM representations below use the quasilinear benchmark.

 \item \textbf{Further features of the full model.}
 First, the principal may also have her own work process.
 Second, agents may produce contractible results at any time during the project and may be affected by results produced by others. This accommodates, for example, precedence constraints between different agents' subtasks and the sharing of common resources.
\end{itemize}
For simplicity, we omit these additional features from the formal analysis below.

\subsection{Definition of the Project Management Mechanism}

\paragraph{Contract offers.}
Each candidate \(i\) submits a contract offer specifying, for every complete communication history \(m\) and result vector \(r\), a signed transfer \(t_i(m,r)\) from the principal to agent \(i\). If the principal accepts agent \(i\), this transfer rule is binding.

\paragraph{Mechanism.}
Given the offers, consider the ``worst-case game'' in which each accepted agent \(i\) can choose any communication strategy and produce any result \(r_i\in\mathcal R_i\), and the accepted agents may coordinate arbitrarily.
The principal evaluates her strategies in this worst-case game.
The mechanism is then the following.
\begin{center}
\textcolor{red}{\textbf{The principal chooses a strategy that maximizes her minimum utility.}}
\end{center}
Together with the contract protocol above, this highlighted sentence defines the \emph{Project Management Mechanism} (PMM), in its first-price form.
Importantly, the PMM is prior-free.

\subsection{Interpretation and rationale}

\paragraph{The principal's choice.}
Effectively, the principal requires each accepted agent to bear all the risks arising from his own uncertainty, on terms specified by his offer.
This allows her to secure a guaranteed utility level and makes different subsets of offers comparable.
The maximin rule is the precise formulation of this requirement.
(It should not be confused with maximin preferences.)

\paragraph{Truthful offers.}
A contract offer is \emph{truthful at utility level \(c_i\)} if it gives agent \(i\) a truthful strategy guaranteeing utility at least \(c_i\) against every strategy profile of the principal and the other agents.
Informally, it is equivalent to the agent saying the following:
\begin{quote}
``Here is my dynamic stochastic working process, including all probabilities, decision opportunities, costs, and preferences.
I accept any contingent plan and payment rule under which, by being truthful, I can guarantee myself at least my specified utility level, no matter what the others do.''
\end{quote}
In property-rights terms, the agent specifies a utility level and offers the principal all indirect control over his working process that is compatible with guaranteeing that level through truthful cooperation.
Such offers allow the principal to assign to each agent the risks generated by that agent's own uncertainty without creating a strategic opportunity for the agent.
The principal can thereby secure a utility guarantee independent of how the project unfolds.
This is the GUE-type logic behind the Project Management Mechanism.
Under the idealized assumptions, conditional on the accepted team and the agents' protected utility levels, truthful offers allow the principal to direct the team's joint execution efficiently and secure the remaining surplus as a fixed utility level.
Because a truthful offer reflects the agent's actual preferences, it can incorporate nonquasilinearity, including risk aversion.

\section{Specializations and the full Project Management Model}

We can gain a clearer understanding of how the Project Management Mechanism works and of its properties by examining simpler special cases, each of which makes a particular feature of the mechanism more transparent.

We use four binary labels. In each pair, green marks the extension and red the corresponding simpler case.
\begin{itemize}
 \item \featureon{D}{Dynamic execution.}
 The interaction has multiple stages.
 Execution may involve stochastic working processes, private actions, and new private chance events.
 \featureoff{D}{Static allocation game.}
 The interaction consists only of allocating fixed objects or tasks and making monetary transfers.
 There are no execution-stage private actions or chance events.

 \item \featureon{M}{Multi-agent.}
 Several agents participate in the project and their tasks may interact.
 \featureoff{M}{Single-agent execution.}
 One agent participates in the execution.

 \item \featureon{R}{Robust.}
 The utility and information assumptions may depart from quasilinearity and perfect informational separation as described above.
 \featureoff{R}{Idealized.}
 The idealized assumptions are imposed.

 \item \featureon{A}{Auction.}
 Privately informed candidates submit offers and the principal selects the participants, as in the mechanism defined above.
 With \featureoffmark{M}, several candidates may compete, but at most one is accepted.
 \featureoff{A}{No auction.}
 The tendering and acceptance stage is suppressed: the accepted agents and their truthful contingent offers are already given, and we start the game from this point.
\end{itemize}
Thus \featureoffmark{M}\featureoffmark{A} means that the game has one principal and one agent.
We discuss three informative specializations rather than all sixteen combinations, and then return to the full Project Management Model.

\subsection{Single-contractor first-price tendering}
\label{sec:single-contractor-tendering}

\begin{center}
\featureon{D}{Dynamic execution} \qquad \featureoff{M}{Single-agent} \qquad \featureon{R}{Robust} \qquad \featureon{A}{Auction}
\end{center}

Consider the Project Management Mechanism in the special case of the model in which at most one candidate is accepted and participates in the execution of the project, although several privately informed candidates may compete for this position.
The dynamic structure of the general model is retained: each candidate's private initial type may specify a stochastic working process with privately observed chance events, private actions, and contractible results.
Under the PMM, candidates submit contract offers specifying contingent payment rules.
The principal's maximin utility conditional on accepting an offer is called the \emph{value} of that offer, and she accepts an offer of highest value.

\paragraph{One-parameter truthful offers with a known principal type.}
Suppose first that the principal's type is common knowledge when offers are submitted.
Hence each candidate can determine the value of any offer.
For a given candidate, strategies that make offers of the same value are accepted with the same probability.
Among them, it can be shown that the recommended truthful strategy gives the candidate the highest utility conditional on acceptance.
The recommended strategies can therefore be indexed by the value of the offer.
This reduction relies on both the single-contractor assumption and the candidate's knowledge of the principal's type.
With quasilinear utilities, these recommended strategies maximize the sum of the offer's value and the candidate's utility conditional on acceptance, and transfers allow this maximum to be divided freely between them.

\paragraph{The quasilinear sell-the-firm representation.}
Let the quasilinear principal obtain payoff \(u_0(r)\) from the contractible result \(r\), while the contractor incurs realized cost \(\kappa(\xi)\).
For a scalar bid \(b\), consider the contract offer
\[
  t_b(r)=u_0(r)-b
\]
The principal's realized utility is
\[
  u_0(r)-t_b(r)=b
\]
while the contractor's realized utility is
\[
  -\kappa(\xi)+t_b(r)=u_0(r)-\kappa(\xi)-b
\]
Thus the contractor is the residual claimant of the project's monetary surplus and, conditional on being selected, his private decisions maximize its expected value.
The sum of the value of an offer and the candidate's utility conditional on acceptance cannot exceed the maximum expected total payoff attainable by the principal and that candidate.
The sell-the-firm strategies attain this bound, and varying \(b\) yields every division of it.

For nonquasilinear vNM utilities, the corresponding truthful offers need not have this simple affine form, but the one-parameter reduction does not rely on quasilinearity.

\paragraph{First-price tendering.}
With a commonly known principal type, each candidate can restrict attention to the one-parameter family above, and the selected candidate is held to his own offer.

In the quasilinear benchmark, because the principal receives \(b\) in every realization under \(t_b\), the mechanism selects the highest bid.
The tender therefore reduces exactly to a single-item first-price auction in the bids \(b\)---equivalently, after reversing signs, to a procurement auction.
With quasilinear bidders, the corresponding second-price auction gives the familiar efficient comparison.
The first-price reduction is an exact description of the induced tendering game, not an efficiency result: first-price tendering does not implement efficient contractor selection in general.

\paragraph{Consortium aggregation.}
Against fixed offers of the outsiders, only a coalition's most competitive offer can determine whether one of its members is selected and what the principal is guaranteed.
The coalition can reproduce the same outcome by entering openly as one consortium and assigning the project internally.
Thus separate identities give the coalition no advantage over an explicit consortium.
This does not prevent consortium formation; it removes an additional gain from concealing the consortium.

\paragraph{Privately known principal type.}
If the principal's type is not common knowledge when offers are submitted, the Project Management Mechanism itself remains unchanged, but the one-parameter reduction no longer applies.
Which concessions are most attractive to the principal for a given division of the available value can depend on her type, so richer contingent offers may then be relevant.

\subsection{The TU-GUM specialization}
\label{sec:tugumspecialization}

\begin{center}
\featureon{D}{Dynamic execution} \qquad \featureon{M}{Multi-agent} \qquad \featureoff{R}{Idealized} \qquad \featureoff{A}{No auction}
\end{center}

The \featureoffmark{A} convention above means that the accepted agents and their truthful contingent offers are already given, and we start the game from this point.
The dynamic and multi-agent structure is retained: the accepted agents may have interacting stochastic working processes, private actions, and newly arriving private information during execution.

The exact TU-GUM reduction imposes the idealized assumptions in two respects:
\begin{itemize}
 \item \textbf{Utilities are quasilinear.}
 In particular, the agents are risk-neutral.
 \item \textbf{No robustness to informational dependence.}
 Each player's newly realized private chance event is assumed to be independent of all information previously available to the other players.
 Unlike the general mechanism, this reduction is not robust to even small departures from this assumption.
\end{itemize}

Let \(c_i\) be the protected utility level specified by accepted agent \(i\)'s truthful offer, and let \(W^*\) be the maximum feasible expected total utility given the set of accepted agents \(A\).
By definition, agent \(i\) has a truthful strategy guaranteeing \(c_i\) against every strategy profile of the principal and the other agents.
If all accepted agents use these strategies, no strategy of the principal can guarantee her more than \(W^*-\sum_{i\in A}c_i\).
Conversely, the fixed-order TU-GUM construction \cite{CsokaEtAl2026} gives the principal a strategy guaranteeing exactly this residual and, together with the agents' truthful strategies, implements an efficient continuation.
The guaranteed utilities of the principal and the accepted agents therefore sum to \(W^*\), so these strategies form a GUE of the continuation game; in particular, the principal's strategy inducing TU-GUM is maximin.

Applying Myerson's Revelation Principle \cite{Myerson1979} to the resulting continuation equilibrium replaces the potentially rich communication protocol admitted by the truthful offers with an outcome-equivalent direct mechanism in which the agents report only their private information as it arrives.
In \cite{Csoka2015}, the principal's maximin strategy is calculated by backward recursion along the common timeline.
When all chance events occur at distinct time points, the revelation-mechanism transform of this construction is exactly the corresponding fixed-order TU-GUM rule \eqref{eq:seqGUM}; the chronological order of the events determines the order of the rule, and report-independent constants determine the protected utility levels.
Appendix~\ref{app:pmmstatic} works out the construction explicitly in the three-agent static example.

\subsection{The first-price combinatorial auction specialization}

\begin{center}
\featureoff{D}{Static allocation game} \qquad \featureon{M}{Multi-agent} \qquad \featureon{R}{Robust} \qquad \featureon{A}{Auction}
\end{center}

With fixed objects or tasks, several potential participants, and a nonadditive value of the selected collection, the Project Management Mechanism specializes to a first-price combinatorial auction---or, after reversing signs, to combinatorial procurement.

The first-price format remains meaningful without quasilinearity.
Risk aversion may change bidding behavior and performance, but it does not invalidate the first-price competitive rationale.
Since under \featureoffmark{D} there are no execution-stage private chance events, the informational-separation component of \featureonmark{R} is vacuous.
In this specialization, \featureonmark{R} refers only to departures from quasilinearity.

The consortium-aggregation argument also extends from the single-item case: against fixed outside bids, a coalition cannot obtain more by submitting separate combinatorial bids under a coordinated strategy than by entering openly as a single consortium with an internal allocation and payment rule.
Consortium formation itself may still benefit its members; the point is only that concealed separate entry provides no additional advantage.
Intuitively, provided sufficient outside competition remains, the formation of relatively small consortia need not fundamentally undermine the competitive performance of first-price tendering.
With quasilinear bidders, the corresponding VCG mechanism provides the familiar efficient benchmark but is highly vulnerable to collusion.

\subsection{The full Project Management Model}

\begin{center}
\featureon{D}{Dynamic execution} \qquad \featureon{M}{Multi-agent} \qquad \featureon{R}{Robust} \qquad \featureon{A}{Auction}
\end{center}

The preceding constructions are specializations of the same one-sentence mechanism.
The full Project Management Model switches on all four extensions: privately informed candidates submit contingent first-price offers, the principal selects a team, and the accepted offers induce a potentially rich dynamic communication and execution game in which she chooses a maximin strategy.
This is the setting studied in \emph{Efficient Teamwork} \cite{Csoka2015}.

Its natural static benchmark is the first-price combinatorial auction.
Even for that benchmark, the available general results do not provide useful formal guarantees matching the levels of allocative efficiency and expected revenue that one would expect under even moderately strong competition.
The familiar second-price alternative has solid mathematical proofs but is not unambiguously superior: it implements efficiency\footnote{For combinatorial auctions, it implements efficiency in DSE. For the general Project Management Model, it implements efficiency in NGUE and therefore also in PBE.} but is highly vulnerable to collusion.
The first-price mechanism instead has the weaker consortium-aggregation property described above, and the usual competitive intuition suggests that it should perform reasonably well under moderate competition, although this intuition is difficult to capture in useful general theorems.

The same competitive intuition suggests that, even in this richer setting, the losses under a reasonable degree of competition should be essentially of the same kind as those already familiar from first-price combinatorial auctions, rather than being compounded by additional losses from failures of cooperation during execution.
For the Project Management Mechanism, \cite{Csoka2015} establishes exact efficiency results in two special cases: under perfect competition, and when the agents know one another's initial types while these types remain unknown to the principal.

\clearpage
\appendix

\section{Weak-dominance fragility of AGV}
\label{app:agvfragility}

In this appendix, type spaces are finite, types are independent with full support, and randomized reporting strategies are allowed in dominance comparisons.
We first give a simple example in which exhaustive IEWDS eliminates every efficient strategy profile for a fixed efficient decision policy.
After recording the scope of the unique-decision protection result, we end with a stronger example in which the failure occurs for every deterministic efficient decision policy.

\subsection{A three-agent counterexample}

Agents 1 and 2 have independent equiprobable types \(L,H\); agent 3 has a single type.
The public decision is \(N\) (none), \(S\) (small investment), or \(B\) (big investment):
\[
\begin{array}{c|ccc}
 & \text{Agent 1} & \text{Agent 2} & \text{Agent 3}\\
 & L\ \text{or}\ H & L\ \text{or}\ H & \\ \hline
N & 0 & 0 & 0\\
S & 10\ \text{or}\ 20 & 0 & -10\\
B & 20\ \text{or}\ 32 & 5\ \text{or}\ 10 & -30
\end{array}
\]
Fix the following efficient decision policy:
\[
\begin{array}{c|cc}
 & \theta_2=L & \theta_2=H\\ \hline
\theta_1=L & N & N\\
\theta_1=H & S & B
\end{array}
\]
There are ties: at \((L,L)\) both \(N\) and \(S\) are efficient, while at \((L,H)\) all three decisions are efficient.
Thus efficiency does not determine a unique decision policy.

After report-independent normalizations, the AGV incentive terms are
\[
G_1(L)=0,\qquad G_1(H)=-15,\qquad
G_2(L)=0,\qquad G_2(H)=-4,\qquad G_3=0
\]
Hence changing a report from \(L\) to \(H\) costs agent 1 exactly \(15\) transfer units and agent 2 exactly \(4\).

To allow randomized reporting, write, for \(i=1,2\),
\[
a_i=\Pr(\widehat\theta_i=H\mid \theta_i=L),\qquad
b_i=\Pr(\widehat\theta_i=H\mid \theta_i=H),\qquad
q_i=\frac{a_i+b_i}{2}
\]
The net gain from reporting \(H\) rather than \(L\), conditional on the true type, is
\[
\begin{array}{c|cc}
 & \theta_i=L & \theta_i=H\\ \hline
\text{Agent 1} & -5+10q_2 & 5+12q_2\\
\text{Agent 2} & -4+5q_1 & -4+10q_1
\end{array}
\]
Consequently, every complete IEWDS procedure is forced through the following implications; arbitrary other deletions may be interspersed:
\[
b_1=1\ \Longrightarrow\ q_1\ge\tfrac12
\ \Longrightarrow\ b_2=1\ \Longrightarrow\ q_2\ge\tfrac12
\ \Longrightarrow\ a_1=1\ \Longrightarrow\ q_1=1
\ \Longrightarrow\ a_2=1
\]
Here \(b_1=1\), \(b_2=1\), and \(a_2=1\) are forced by strict dominance at the relevant reduced strategy sets; \(a_1=1\) is forced by weak dominance, with strict inequality against a surviving strategy with \(q_2>1/2\).
The intervening statements about \(q_1\) and \(q_2\) follow immediately from their definitions.

\paragraph{Order independence.}
The ``always report \(H\)'' strategy of either agent cannot be the first of the two such strategies to be deleted: against the other agent's always-\(H\) strategy, it is strictly optimal for both of that player's types.
Hence these two strategies survive every intermediate reduction and provide the strict witnesses needed above.
The unique survivor is therefore that agents 1 and 2 always report \(H\).
The mechanism always selects \(B\), which is inefficient at \((L,L)\) because \(B\) has total payoff \(-5\), whereas \(N\) and \(S\) have total payoff \(0\).

Thus exhaustive IEWDS under AGV eliminates every efficient strategy profile.

\subsection{Unique-decision protection: scope and limits}

\begin{proposition}[Unique-decision protection]
Suppose the efficient decision is unique at every type profile.
Then the truthful reporting profile survives every iterated elimination of weakly dominated strategies in the AGV mechanism, even when randomized reports are allowed.
\end{proposition}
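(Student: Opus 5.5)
The plan is an induction over the elimination sequence. The invariant is that, at every stage of any iterated elimination of weakly dominated strategies, the truthful strategy \(s_i^*\) of every agent \(i\) is still present. Since the truthful profile is present in the original game, it suffices to show one thing: if all truthful strategies survive up to some stage, then \(s_i^*\) cannot be weakly dominated at that stage, neither by a pure nor by a randomized strategy of agent \(i\). This argument does not depend on the order of deletions.

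\textbf{Step 1: agent \(i\)'s own payoff.} Under \eqref{eq:AGV}, agent \(i\) with true type \(\theta_i\) who reports \(r\) against reports \(\widehat\theta_{-i}\) receives
\[
 u_i\bigl(\theta_i,\chi(r,\widehat\theta_{-i})\bigr)+G_i(r)-\tfrac{1}{n-1}\textstyle\sum_{k\ne i}G_k(\widehat\theta_k).
\]
The last term does not depend on \(r\), so only the first two terms matter for comparisons of agent \(i\)'s strategies. Moreover, \(G_i(r)\) is determined by the map \(\theta_{-i}\mapsto\chi(r,\theta_{-i})\).

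\textbf{Step 2: the reduced objective against truthful opponents.} By independence, when the others report truthfully, \(\widehat\theta_{-i}\) is distributed as \(\widetilde\theta_{-i}\sim\mu_{-i}\). Then \(G_i(r)\), up to a constant, equals \(\E\sum_{j\ne i}u_j(\widetilde\theta_j,\chi(r,\widetilde\theta_{-i}))\). Agent \(i\)'s interim objective therefore becomes expected total welfare at \((\theta_i,\widetilde\theta_{-i})\) under the decision \(\chi(r,\widetilde\theta_{-i})\).

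\textbf{Step 3: truthful is a best response, and equality forces outcome-equivalence.} By Step 2, the truthful report maximizes the reduced objective. Now use uniqueness of the efficient decision together with full support. If \(\chi(r,\theta_{-i})\ne\chi(\theta_i,\theta_{-i})\) for some \(\theta_{-i}\), then \(\chi(r,\theta_{-i})\) is strictly inefficient at \((\theta_i,\theta_{-i})\), and that profile has positive probability. So report \(r\) is strictly worse than truth for type \(\theta_i\). Otherwise \(\chi(r,\cdot)\equiv\chi(\theta_i,\cdot)\), which gives \(G_i(r)=G_i(\theta_i)\). In that case, by Step 1, \(r\) and \(\theta_i\) yield agent \(i\) identical utility against every report profile of the others.

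\textbf{Step 4: no weak domination of truth.} Suppose \(\sigma_i\), possibly randomized, weakly dominates \(s_i^*\) in the current reduced game. The truthful profile \(s_{-i}^*\) is still present by the induction hypothesis, so \(\sigma_i\) does at least as well as \(s_i^*\) against it. By Step 3, this forces \(\sigma_i\) to place, for each type \(\theta_i\), which has positive probability, all its mass on reports outcome-equivalent to \(\theta_i\). Hence \(\sigma_i\) and \(s_i^*\) give agent \(i\) the same payoff against every pure, and by linearity every mixed, strategy profile of the others. This rules out the strict inequality that weak domination requires. So \(s_i^*\) survives the stage, and the induction closes.

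The main obstacle is Step 3, the passage from equality of expected utilities to pointwise outcome-equivalence. This is exactly where uniqueness of the efficient decision and full support are needed. The counterexample above shows the tie-breaking failure that occurs without uniqueness: there, \(B\) at \((L,H)\) is an efficient but distinct decision. Care is also needed to check that \(G_i(r)\) depends only on the decision map \(\chi(r,\cdot)\), so that outcome-equivalent reports are also transfer-equivalent for agent \(i\).
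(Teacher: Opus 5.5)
Your proposal is correct and follows essentially the same route as the paper's proof. Both use the AGV welfare identity against truthful opponents, then uniqueness plus full support to show that every report is either strictly worse than truth or payoff-equivalent to it against all opponent reports. Both finish by noting that the truthful opponent profile stays available as a witness at every stage; your induction invariant is the paper's ``first truthful deletion'' contradiction in another form.
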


\begin{proof}
Fix agent \(i\) and a realized type \(\theta_i\).
Suppose all other agents report truthfully.
Up to a constant independent of \(i\)'s report, agent \(i\)'s interim expected utility from report \(r_i\) is
\[
\E_{\theta_{-i}}\Bigl[
\sum_{j\in\N}u_j\bigl(\theta_j,\chi(r_i,\theta_{-i})\bigr)
\Bigr]
\]
This is the standard AGV welfare identity: the incentive term makes agent \(i\) internalize the others' expected payoff, while the redistribution term involving the other agents' \(G_j\)'s is independent of \(r_i\).

Truthful reporting selects the efficient decision for every \(\theta_{-i}\).
If another pure report changes the decision at some \(\theta_{-i}\), uniqueness of the efficient decision makes total welfare strictly smaller there; full support makes this event have positive probability.
Hence that report is strictly worse against truthful opponents.
If a pure report never changes the decision against truthful opponents, then, because every opponent report profile is a possible type profile, it never changes the decision against any opponent reports and is payoff-equivalent to truth everywhere.
Hence any randomized report is either strictly worse than truth against truthful opponents or payoff-equivalent to truth everywhere, and therefore cannot weakly dominate it.

Now consider any IEWDS sequence and suppose, for contradiction, that some truthful type-report is the first truthful one deleted.
Immediately before that deletion, every opponent still has all of her truthful type-reports available.
The truthful opponent profile is therefore still a feasible witness against the alleged dominating strategy, contradicting the previous paragraph.
Thus no truthful type-report can ever be deleted, and the truthful efficient profile survives every elimination order.
\end{proof}

Unique efficient decisions remove the sharp IEWDS diagnosis but do not by themselves resolve the underlying equilibrium-selection problem.
An arbitrarily small payoff perturbation can make the efficient decision unique at every type profile while causing one weak-dominance step in the argument to fail by an arbitrarily small amount.
The perturbed game remains arbitrarily close to the counterexample, so truthful play does not thereby become intuitively compelling.

The dynamic extension of AGV is more fragile: the same failure can occur even in generic environments with a unique efficient decision policy.

\begin{proposition}[{\cite{Csoka2021,CsokaEtAl2026}}]
For the Balanced Team Mechanism (dynamic AGV; \cite{AtheySegal2013}), there exists a generic environment with at least three agents and a unique efficient decision policy in which exhaustive IEWDS eliminates all efficient strategy profiles.
\end{proposition}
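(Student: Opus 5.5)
The statement is existential, so I will prove it by constructing an explicit environment. The plan is to lift the static counterexample of the previous subsection into a two-period dynamic environment, where the BTM timing itself removes the ties that the static construction relied on.

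\textbf{Step 1: the environment.} Agents 1 and 2 each receive a private binary type in period~1. A third agent, possibly with further agents, receives a private type only in period~2, and a public decision is taken in each period. I would choose payoffs so that the efficient policy is unique at every history, after perturbing all payoffs generically. The key design point is that the BTM transfers charge each agent the expected externality of his period-$t$ report, taken \emph{conditional on the public history}. So the period-1 reports of agents 1 and 2 change both the period-1 decision and the continuation values used to price later reports. This channel lets a misreport by agent 1 alter the incentive terms faced by agents 2 and 3, which is the dynamic analogue of the interlocking inequalities $-5+10q_2$, $5+12q_2$, $-4+5q_1$, $-4+10q_1$ above.

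\textbf{Step 2: the dominance chain.} I would parametrize randomized reporting by $a_i,b_i,q_i$ as in the static appendix and compute, for each agent and true type, the net gain from reporting $H$ rather than $L$ as an affine function of the opponents' reporting probabilities. Then I would tune the parameters so that the same forced chain of implications holds. The chain begins with $b_1=1$ by strict dominance. It continues through $q_1\ge\tfrac12$ and $b_2=1$, then $q_2\ge\tfrac12$, then $a_1=1$, and finally $a_2=1$. Crucially, every step must now be \emph{strict} at the relevant reduced strategy set. In the static case only the step $a_1=1$ used weak dominance, with a tie at $q_2=\tfrac12$. Here the dynamic pricing adds a term (from period-2 continuation externalities) that I would choose so this inequality becomes strict. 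That is exactly why the conclusion survives genericity, in contrast with the Unique-decision protection proposition.

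\textbf{Step 3: order independence and inefficiency.} I would repeat the order-independence argument from the static counterexample. The always-$H$ strategies are strict best responses to one another, so they survive every intermediate reduction and serve as strict witnesses at each step. Hence every exhaustive IEWDS ends with agents 1 and 2 always reporting $H$. I would then check that at the type profile $(L,L)$ the resulting decision path is strictly inefficient. This also rules out any efficient profile among the survivors: any efficient profile must report $L$ with positive probability at $(L,L)$ under the unique efficient policy.

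\textbf{Main obstacle.} The hard part is Step 2: showing that the strict inequalities hold on an open set of payoffs. That requires computing BTM's history-dependent expected externalities explicitly and verifying that the chain does not collapse into a tie. Here I would rely on the construction of \cite{Csoka2021,CsokaEtAl2026} and check the finitely many affine inequalities directly.
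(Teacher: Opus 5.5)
\textbf{The key design step is infeasible.} The step you call the key design point cannot work. In any IEWDS sequence, look at the first deletion of a truthful type-report by any agent. At that moment every opponent still has its truthful strategy available; this is the same observation used in the proof of the unique-decision protection proposition. Because truthful reporting is a PBE of BTM, the report being deleted is a best response to truthful opponents.

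So the first truthful deletion can never be by strict dominance. It must be a weak dominance in which the dominating strategy ties \emph{exactly} with truth against the truthful opponent profile. In your chain this first deletion is the step $a_1=1$. It removes the truthful report of agent~1's type $L$ while agent~2's truthful rule ($a_2=0$, $b_2=1$) still survives, and so does agent~3's unless it was deleted even earlier. This is why the static gain $-5+10q_2$ vanishes at $q_2=\tfrac12$: the tie is forced by the equilibrium property, not an accident to be perturbed away. A period-2 term that made this gain strictly positive against truthful opponents would contradict BTM's equilibrium property. Arranging for another truthful report to be deleted first only moves the same obstruction to that earlier step. Your requirement that ``every step must now be strict'' therefore cannot be met.

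\textbf{The missing idea.} This also shows that you have misdiagnosed why genericity destroys the static example. The cause is not that one inequality is weak. It is the unique-decision protection argument: in the static game, a report that ties with truth against truthful opponents is payoff-equivalent to truth against every opponent report, so it cannot weakly dominate truth.

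A generic dynamic counterexample has to break exactly this implication. It needs a non-truthful strategy with two properties. First, it must tie with truth against truthful opponents for structural reasons, so that the tie survives payoff perturbations. Second, it must be strictly better against some surviving non-truthful opponent behavior, for example because later reports and BTM's history-dependent incentive terms can depend on an earlier public report. Your Steps~1--3 never identify such a strategy or channel, so they do not produce an example. Your closing appeal to \cite{Csoka2021,CsokaEtAl2026} is the same citation that is the paper's entire support for this proposition; it does not reconstruct the argument.
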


\subsection{A counterexample for every deterministic efficient policy}

The following stronger example was found a few days ago by Alexander Piankov.

\begin{proposition}[Every deterministic efficient policy]
There is a three-agent finite independent-private-values environment with two public decisions and exactly two deterministic efficient decision policies such that, under either policy, every exhaustive IEWDS sequence eliminates all efficient strategy profiles, independently of the elimination order, even when randomized reports may be used in dominance comparisons.
\end{proposition}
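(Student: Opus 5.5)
The proof will be an explicit construction modeled on the three-agent counterexample above, modified so that the decision space has only two decisions and exactly one type profile has a tie. With exactly one tie, there are exactly two deterministic efficient decision policies, and we must defeat both at once. I would take agents 1 and 2 with independent equiprobable types \(L,H\) and a single-type agent 3 who absorbs the cost of the public project, and decisions \(N\) and \(Y\).

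\textbf{Step 1: payoffs.} I would choose integer payoffs so that:
\begin{itemize}
 \item \(Y\) is strictly efficient at \((H,H)\) and \((H,L)\);
 \item \(N\) is strictly efficient at \((L,H)\);
 \item \(N\) and \(Y\) tie at \((L,L)\).
\end{itemize}
The two policies \(\chi^N,\chi^Y\) then differ only at \((L,L)\).

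\textbf{Step 2: AGV incentive terms and gain tables.} For each policy I would compute \(G_1,G_2\) after report-independent normalization. Using the same notation \(a_i,b_i,q_i\) as before, I would then write the conditional net gain from reporting \(H\) rather than \(L\). Each gain is affine in the opponent's \(q_{-i}\), or in \((a_{-i},b_{-i})\) through \(q_{-i}\), because agent 3 is trivial and types are independent.

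The target is again a forced chain for each policy, of the form
\[
b_1=1\Rightarrow q_1\ge\tfrac12\Rightarrow b_2=1\Rightarrow q_2\ge\tfrac12\Rightarrow a_1=1\Rightarrow a_2=1 .
\]
It must end with both agents always reporting \(H\). Under either policy this makes the mechanism select \(Y\) at \((L,H)\), where \(Y\) is strictly inefficient. Hence every efficient profile is eliminated, since any efficient profile must select \(N\) at \((L,H)\).

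Alternatively, the chain could end at ``always \(L\)'', choosing \(N\) at \((H,L)\). I would tune the payoff parameters and try both orientations. The requirement is that the intercepts of the affine gain tables have the right signs under both policies simultaneously. Only the \((L,L)\) entries of the \(G\)'s and of the gains differ between \(\chi^N\) and \(\chi^Y\), so this is a finite system of linear inequalities. I would solve it by a small search, as Piankov presumably did.

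\textbf{Step 3: order independence.} I would reuse the argument from the first example. Against the other agent's always-\(H\) strategy, always-\(H\) is strictly optimal for both types of each agent, so neither always-\(H\) strategy can be the first of the pair deleted. These two strategies therefore survive every intermediate reduction and serve as the strict witnesses required by the weak-dominance steps. Each step of the chain is then checked to be forced in every complete procedure under either policy, with randomized reports handled through the affine dependence on \(q\).

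\textbf{Main obstacle.} The hardest part is Step 2: finding parameters for which a single chain works for both policies at once. The tie at \((L,L)\) shifts \(G_i(L)\) versus \(G_i(H)\) differently under the two policies. My first attempt would place the tie so that the policy difference changes the intercepts only in entries that are already strictly signed with slack. If that fails, I would let the chains differ between the two policies, each ending at an inefficient profile.
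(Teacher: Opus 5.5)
\textbf{There is a genuine gap, and it is structural, not a matter of tuning parameters.} No environment of the shape you propose can satisfy the proposition: two types each for agents 1 and 2, a single-type agent 3, two decisions, and exactly one tie.

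Take your own Step~1 configuration under the policy \(\chi^N\), which breaks the tie at \((L,L)\) toward \(N\). Then the decision is \(Y\) exactly when agent~1 reports \(H\), whatever agent~2 reports. This has two consequences.
\begin{itemize}
 \item Agent~2 is a dummy. Neither the decision nor \(G_2\) depends on her report, so none of her strategies can ever be weakly dominated.
 \item Agent~1's gain from reporting \(H\) rather than \(L\) does not depend on the opponents' strategies. Writing \(v_i\) for the \(Y\)-minus-\(N\) payoff, the gain equals the expected welfare difference \(v_1(\theta_1)+\E[v_2(\theta_2)]+v_3\).
 \item For type \(H\) this is strictly positive, since both cells in that row are strictly \(Y\). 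For type \(L\) it is strictly negative, since one cell is strictly \(N\), the other is a tie, and types have full support.
\end{itemize}
So truth is strictly dominant for agent~1. Every IEWDS sequence leaves agent~1 truthful and agent~2 unrestricted, and every surviving profile implements \(\chi^N\) itself, which is efficient. Your chain toward ``always \(H\)'' can never start under \(\chi^N\). Your fallback of using different chains for the two policies does not help either.

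The same failure occurs in every such \(2\times2\) configuration with one tie. If the tie sits at the lowest--lowest or highest--highest value profile, one tie-break gives a constant policy, under which every strategy profile is efficient. If the tie is off that diagonal, one tie-break makes the decision depend on a single agent's report, exactly as above.

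The missing idea is to place the tie at a \emph{middle} type, which requires at least three types for one agent. The paper does this with \(\Theta_1=\{\alpha,\beta,\gamma\}\), \(\Theta_2=\{\ell,h\}\), and the tie at \((\beta,\ell)\).
\begin{itemize}
 \item Under \(\chi^Y\), reports \(\beta\) and \(\gamma\) become payoff-equivalent. Under \(\chi^Z\), reports \(\alpha\) and \(\beta\) do.
 \item In each case agent~1 has two effective reports but three true types, and both agents remain pivotal. The extra type supplies the slack needed for a forced chain.
 \item The two chains genuinely differ. Under \(\chi^Y\), play collapses to agent~1 always reporting \(\alpha\) and agent~2 always reporting \(h\). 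Under \(\chi^Z\), types \(\beta,\gamma\) report \(\gamma\) and agent~2 always reports \(\ell\).
 \item The strictness witnesses are shown to survive because the relevant payoff differences still take both signs on the surviving set. The paper does not reuse the ``always-\(H\) against always-\(H\)'' argument from the first example.
\end{itemize}

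Finally, your proposal never exhibits numbers or verifies a chain. For an existence statement proved by example, that verification is essentially the whole proof.
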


\begin{proof}
Let
\[
\Theta_1=\{\alpha,\beta,\gamma\},\qquad
\Theta_2=\{\ell,h\},\qquad
|\Theta_3|=1
\]
with uniform independent priors, and let the two decisions be \(y,z\).
The decision utilities are
\[
\begin{array}{c|ccc|cc|c}
 &u_1(\alpha)&u_1(\beta)&u_1(\gamma)
 &u_2(\ell)&u_2(h)&u_3\\ \hline
 y&0&2&3&2&5&0\\
 z&0&0&0&0&0&4
\end{array}
\]
The efficient correspondence is
\[
\begin{array}{c|cc}
 &\ell&h\\ \hline
\alpha&z&y\\
\beta&\{y,z\}&y\\
\gamma&y&y
\end{array}
\]
Thus there is one efficient tie, at \((\beta,\ell)\), and exactly two deterministic efficient policies, denoted by \(\chi^Y\) and \(\chi^Z\) according to their choice at that tie.

When an agent compares two own reports, all report-independent transfer terms cancel.
After payoff-equivalent reports have been identified, if the payoff difference between the two remaining reports takes both signs against surviving opponent report rules, neither report can be dominated, even by a lottery over them.

Under \(\chi^Y\), reports \(\beta\) and \(\gamma\) of agent 1 are payoff-equivalent; write \(n\) for this report class.
Let \(q=\Pr(\widehat\theta_2=h)\) and \(p_\alpha=\Pr(\widehat\theta_1=\alpha)\), where these probabilities combine the prior and the possibly randomized reporting rules.
Up to report-independent constants, the relevant payoff differences are
\[
\begin{array}{c|c@{\qquad}c|c}
\text{type 1}&U_1(\alpha)-U_1(n)
&\text{type 2}&U_2(h)-U_2(\ell)\\ \hline
\alpha&1&\ell&2p_\alpha-\frac43\\
\beta&2q-1&h&5p_\alpha-\frac43\\
\gamma&4q-3&&
\end{array}
\]
Initially, only report class \(n\) for true type \(\alpha\) is eliminable, so \(p_\alpha\ge1/3\).
Then report \(\ell\) is strictly dominated for true type \(h\), giving \(q\ge1/2\).
Next \(\alpha\) weakly dominates \(n\) for true type \(\beta\), with strict inequality at a surviving profile with \(q=1\), so \(p_\alpha\ge2/3\).
Then \(h\) weakly dominates \(\ell\) for true type \(\ell\), with strict inequality at \(p_\alpha=1\), and hence \(q=1\).
Finally, \(\alpha\) strictly dominates \(n\) for true type \(\gamma\).
At every intermediate step, the reports needed as strictness witnesses survive because their payoff differences still take both signs.
Consequently every exhaustive sequence reaches
\[
\widehat\theta_1(\alpha)=\widehat\theta_1(\beta)=\widehat\theta_1(\gamma)=\alpha,
\qquad
\widehat\theta_2(\ell)=\widehat\theta_2(h)=h
\]
The mechanism then always selects \(y\), which is inefficient at \((\alpha,\ell)\).

Under \(\chi^Z\), reports \(\alpha\) and \(\beta\) of agent 1 are payoff-equivalent; write \(m\) for this class and \(g\) for report \(\gamma\).
Let \(q=\Pr(\widehat\theta_2=h)\) and \(p_\gamma=\Pr(\widehat\theta_1=\gamma)\).
The relevant payoff differences are
\[
\begin{array}{c|c@{\qquad}c|c}
\text{type 1}&U_1(g)-U_1(m)
&\text{type 2}&U_2(h)-U_2(\ell)\\ \hline
\alpha&-1&\ell&-2p_\gamma\\
\beta&1-2q&h&3-5p_\gamma\\
\gamma&3-4q&&
\end{array}
\]
Report \(g\) is strictly dominated by \(m\) for true type \(\alpha\), while report \(h\) is weakly dominated by \(\ell\) for true type \(\ell\).
Before the latter deletion, no report of true type \(\beta\), \(\gamma\), or \(h\) can be the first to disappear: with both reports of type \(\ell\) surviving, \(q=0\) and \(q=1\) make the two agent-1 differences take both signs; with both report classes of types \(\beta\) and \(\gamma\) surviving, \(p_\gamma=0\) and \(p_\gamma=2/3\) make \(3-5p_\gamma\) take both signs.
Thus a profile with \(p_\gamma>0\) remains available, making the dominance for true type \(\ell\) strict, so every exhaustive sequence deletes report \(h\) for that type and obtains \(q\le1/2\).
It follows that \(g\) weakly dominates \(m\) for true type \(\beta\) and strictly dominates it for true type \(\gamma\).
Neither report of true type \(h\) can disappear before both deletions, because surviving profiles with \(p_\gamma\le1/3\) and \(p_\gamma=2/3\) make \(3-5p_\gamma\) take both signs.
After these deletions \(p_\gamma\ge2/3\), so \(h\) is strictly dominated by \(\ell\) for true type \(h\).
Every exhaustive sequence therefore reaches
\[
\widehat\theta_1(\alpha)\in\{\alpha,\beta\},\qquad
\widehat\theta_1(\beta)=\widehat\theta_1(\gamma)=\gamma,
\qquad
\widehat\theta_2(\ell)=\widehat\theta_2(h)=\ell
\]
At the true profile \((\alpha,h)\), every surviving strategy profile induces decision \(z\), although \(y\) is uniquely efficient there.
\end{proof}

The example above establishes the failure for every deterministic efficient policy, but not for randomized tie-breaking at its efficient tie.

\begin{conjecture}
There exists a social-choice environment in which, for every efficient decision policy, including every randomized selection at efficient ties, exhaustive IEWDS under AGV eliminates all efficient strategy profiles, independently of the order of elimination.
\end{conjecture}

\section{VCG mechanisms and GU-VCG}
\label{app:guvcg}

The arbitrary Groves terms permit many payment rules within the VCG family.
The choice \(h_i\equiv0\) is algebraically natural and prior-free.
Another standard prior-free choice is the Clarke pivot rule
\[
 h_i^{\mathrm{piv}}(\widehat\theta_{-i})
 :=-\max_{x\in X}\sum_{j\ne i}u_j(\widehat\theta_j,x)
\]
With our sign convention, its transfer to agent \(i\) is
\[
 y_i^{\VCG,\mathrm{piv}}(\widehat\theta)
 =\sum_{j\ne i}u_j\bigl(\widehat\theta_j,\chi(\widehat\theta)\bigr)
 -\max_{x\in X}\sum_{j\ne i}u_j(\widehat\theta_j,x)
 \le0
\]
If a neutral report assigning zero payoff to every decision were available to agent \(i\), then under that report the chosen decision would maximize the other agents' reported welfare and the pivot transfer would be zero.

Although the VCG family itself is prior-free, a reference distribution may be used ex ante to select a member of the family.
For each fixed reference distribution \(\nu\), one may choose Groves terms \(h_i^\nu(\widehat\theta_{-i})\); the resulting mechanism remains a dominant-strategy implementation whether or not \(\nu=\mu\), in particular, whether or not the true types are independent.

We consider two such choices based on a product reference distribution
\[
  \nu:=\bigotimes_{i\in\N}\nu_i
\]
Write \(V_j^\nu\) and \(\gamma_B^{\nu,i\to j}\) for the objects in Section~1 with \(\mu\) replaced by \(\nu\).

The first prior-dependent choice is the \emph{conditionally centered VCG rule}:
\[
\begin{aligned}
 h_i^{\mathrm{cen},\nu}(\widehat\theta_{-i})
 &:=-\E_{\widetilde\theta_i\sim\nu_i}\Bigl[
 \sum_{j\ne i}u_j\bigl(\widehat\theta_j,
 \chi(\widetilde\theta_i,\widehat\theta_{-i})\bigr)\Bigr]\\
 y_i^{\VCG,\mathrm{cen},\nu}(\widehat\theta)
 &:=y_i^{\VCG,h^{\mathrm{cen},\nu}}(\widehat\theta)
 =\sum_{j\ne i}\gamma_{\N\setminus\{i\}}^{\nu,i\to j}
\end{aligned}
\]
For every fixed \(\widehat\theta_{-i}\), its expected transfer over \(\widetilde\theta_i\sim\nu_i\) is zero.
The second is \emph{guaranteed-utility VCG} (GU-VCG), introduced here.
Under the additional conditions stated below, it is also a GUE implementation in a restricted strategy space.

\subsection{The GU-VCG mechanism}

Define
\begin{equation}
 y_i^{\GUVCG,\nu}(\widehat\theta)
 :=y_i^{\VCG,\mathrm{cen},\nu}(\widehat\theta)
   +V_i^\nu(\varnothing)
   -V_i^\nu(\widehat\theta_{-i})
 \label{eq:GUVCG}
\end{equation}
The adjustment depends only on the other agents' reports.
Hence GU-VCG remains a Groves mechanism, and truthful reporting is a dominant strategy for every reference distribution \(\nu\), whether or not it is correctly calibrated.
GU-VCG is not budget-balanced.\footnote{For two agents, up to agent-specific additive constants, this transfer rule coincides with the externality-free construction of Zik \cite{Zik2021}.}

The individual guarantees below do not make GU-VCG a GUE implementation in the unrestricted reporting game.
Because its transfers need not sum to zero, the agents' aggregate utility can exceed the sum of their guaranteed utilities when the mechanism runs a deficit.
The three-agent example below exhibits such a deviation explicitly.

Suppose now that the product reference distribution is correct, \(\nu=\mu\).
For every fixed \(\widehat\theta_{-i}\), the conditional-centering property of \(y_i^{\VCG,\mathrm{cen},\mu}\) and equation~\eqref{eq:GUVCG} give
\begin{equation}
 \E_{\theta_i}\!\Bigl[
 u_i\bigl(\theta_i,\chi(\theta_i,\widehat\theta_{-i})\bigr)
 +y_i^{\GUVCG,\mu}(\theta_i,\widehat\theta_{-i})
 \Bigr]
 =V_i(\varnothing)
 \label{eq:GUVCGguarantee}
\end{equation}

\begin{proposition}[Restricted GUE property]
Suppose \(\nu=\mu\), and restrict attention to reporting-strategy profiles for which, for every agent \(i\), the other agents' reports \(\widehat\theta_{-i}\) are independent of \(\theta_i\).
Then truthful reporting is a GUE, so GU-VCG is a GUE implementation in this restricted strategy space.
\end{proposition}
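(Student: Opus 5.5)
The plan is to reduce the claim to the pointwise identity \eqref{eq:GUVCGguarantee} and then integrate it against the distribution of the other agents' reports. The restriction in the statement is exactly what allows this integration to go through.

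First I would verify \eqref{eq:GUVCGguarantee} directly from the definitions with \(\nu=\mu\). Fix \(\widehat\theta_{-i}\) and split the left-hand side into two parts. The first part is \(\E_{\theta_i\sim\mu_i}[u_i(\theta_i,\chi(\theta_i,\widehat\theta_{-i}))]\). By the definition of \(V_i\) with \(B=\N\setminus\{i\}\), where only \(\widetilde\theta_i\sim\mu_i\) is resampled, this equals \(V_i(\widehat\theta_{-i})\). The second part is the expected transfer. Its component \(y_i^{\VCG,\mathrm{cen},\mu}\) has zero mean over \(\theta_i\sim\mu_i\), because \(h_i^{\mathrm{cen},\mu}\) is by construction minus the \(\mu_i\)-expectation of \(W_{-i}\). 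Its component \(V_i(\varnothing)-V_i(\widehat\theta_{-i})\) does not depend on \(\theta_i\). Adding the two parts, the \(V_i(\widehat\theta_{-i})\) terms cancel and the total is \(V_i(\varnothing)\).

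Second, I would establish the guarantee. Fix agent \(i\) playing truthfully, together with an arbitrary (possibly randomized) restricted strategy profile of the others. By the restriction, the random vector \(\widehat\theta_{-i}\) is independent of \(\theta_i\). Conditional on any realization of \(\widehat\theta_{-i}\), the type \(\theta_i\) therefore still has law \(\mu_i\), and agent \(i\)'s realized utility is \(u_i(\theta_i,\chi(\theta_i,\widehat\theta_{-i}))+y_i^{\GUVCG,\mu}(\theta_i,\widehat\theta_{-i})\). Since payoffs depend on \(\theta_{-i}\) only through \(\widehat\theta_{-i}\) (private values), the conditional expectation equals \(V_i(\varnothing)\) by the first step. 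Taking the outer expectation over \(\widehat\theta_{-i}\), via the tower property, gives ex-ante expected utility exactly \(V_i(\varnothing)\), whatever the others do.

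Third, I would compare with the truthful profile. Truthful reports \(\widehat\theta_{-i}=\theta_{-i}\) are independent of \(\theta_i\) because \(\mu\) is a product measure, so this profile lies in the restricted space. The same computation then shows that each agent's utility under it is also \(V_i(\varnothing)\). Truthful reporting thus guarantees each agent at least his utility at the truthful efficient profile, which is the GUE requirement, and hence GU-VCG is a GUE implementation in the restricted space.

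The main obstacle is the conditioning step. Without the restriction, the others' strategies could correlate \(\widehat\theta_{-i}\) with \(\theta_i\), for example through correlated types or coordinated deviations, and then the conditional law of \(\theta_i\) would no longer be \(\mu_i\). For measurability with randomized strategies, I would first work with a fixed realization of the others' randomization and then average over it.
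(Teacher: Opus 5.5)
Your first two steps are correct. The derivation of \eqref{eq:GUVCGguarantee} from conditional centering is right, and the conditioning argument correctly uses the restriction to show that a truthful agent \(i\) obtains exactly \(V_i(\varnothing)\) against every admissible profile of the others. The gap is in your third step, where you treat ``truthful reporting guarantees each agent his truthful-profile utility'' as the whole GUE requirement. That matches the informal wording of Section~1, and it would suffice for a budget-balanced mechanism such as TU-GUM, whose total utility equals total payoff and is therefore bounded by \(M\). GU-VCG, however, is not budget-balanced. A GUE also requires the guarantees to exhaust what is feasible: no admissible strategy profile may give the agents aggregate expected utility above \(\sum_i V_i(\varnothing)\). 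The paper makes exactly this point just before the proposition. It also notes that GU-VCG already has the individual guarantees in the standard private-type game, yet is a GUE only under the restriction. The grand-coalition example, with aggregate utility \(15>9\), shows the failure mode that the restriction must exclude. Your argument never bounds utility under profiles in which no one reports truthfully, so this part of the claim is left unproved.

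The missing step follows from dominant-strategy truthfulness. Fix any admissible profile and any agent \(i\). In every realization of \(\theta_i\), of \(\widehat\theta_{-i}\), and of any randomization, replacing \(i\)'s report by \(\theta_i\) weakly raises his utility, because GU-VCG is a Groves rule. Under this same profile, \(\widehat\theta_{-i}\) is independent of \(\theta_i\), so your Step~2 computation shows that the replaced utility has expectation \(V_i(\varnothing)\). Hence every agent receives at most \(V_i(\varnothing)\) under every admissible profile. Aggregate expected utility is therefore at most \(\sum_i V_i(\varnothing)\), and the truthful profile attains this bound. This step makes the guarantees tight and makes truthful reporting total-utility maximizing in the restricted space. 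It is also what delivers the collusion-proofness that the GUE conclusion is meant to carry.
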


\begin{proof}
The restriction allows arbitrary ex-ante coordination and common randomization independent of the realized types, but excludes conditioning reports on information about another agent's realized type.
By equation~\eqref{eq:GUVCGguarantee} and the independence of \(\widehat\theta_{-i}\) from \(\theta_i\), truthful reporting gives agent \(i\) expected utility \(V_i(\varnothing)\) against every admissible strategy profile of the other agents.
Because truthful reporting is a dominant strategy, no admissible reporting strategy can give agent \(i\) expected utility above \(V_i(\varnothing)\).
Thus every admissible strategy profile gives each agent at most her guaranteed utility, while the truthful profile gives every agent exactly that utility.
The truthful profile therefore maximizes total expected utility and is a GUE; it also induces the efficient decision policy.
\end{proof}

Correct calibration and the informational restriction are needed for the utility guarantees and the GUE conclusion; efficiency and dominant-strategy truthfulness require neither.
Thus GU-VCG and TU-GUM provide the same individual utility guarantees in the standard private-type game, but their robustness differs: TU-GUM combines its guarantees with budget balance, whereas GU-VCG is a GUE only under the restriction above.

\subsection{The three-agent example}

Consider the example in Section~\ref{sec:three-agent-example}.
Here \(V_i(\varnothing)=3\), while the expected payoff of agent \(i\), with the other two reports fixed and \(i\)'s type drawn from the prior, is
\[
 V_i(\widehat\theta_{-i})
 =\begin{cases}
 0,&\widehat\theta_{-i}=(-6,-6)\\
 5,&\widehat\theta_{-i}\in\bigl\{(-6,10),(10,-6)\bigr\}\\
 2,&\widehat\theta_{-i}=(10,10)
 \end{cases}
\]
The GU-VCG adjustment to the conditionally centered rule is therefore \(3,-2,1\), respectively.
More generally, every symmetric VCG rule is determined by three constants \(a,b,c\): for a fixed agent, the transfers corresponding to own reports \(-6,10\) are \(a,a\) when the other reports are \((-6,-6)\), \(b,b+4\) when they are mixed, and \(c,c\) when they are \((10,10)\).
Taking \(\nu=\mu\) for the two reference-distribution rules, the table displays the full symmetric VCG family, the four VCG choices described above, and the two budget-balanced mechanisms; as before, permuting a report profile permutes the coordinates of its transfer vector.

\begin{table}[htbp]
\centering
\caption{The symmetric VCG family and six specific transfer rules.}
\label{tab:mechanism-summary-extended}
\small
\renewcommand{\arraystretch}{1.12}
\setlength{\tabcolsep}{3.2pt}
\begin{tabularx}{\textwidth}{@{}l *{4}{>{\centering\arraybackslash}X}@{}}
\toprule
Reports \(\widehat\theta_{\N}\)
 & \(\textcolor{cblue}{\tuplefit{-6}{-6}{-6}{-6}{-6}{-6}}\)
 & \(\textcolor{cblue}{\tuplefit{-6}{-6}{10}{-6}{-6}{10}}\)
 & \(\textcolor{cblue}{\tuplefitcompact{-6}{b+4}{b+4}{-6}{10}{10}}\)
 & \(\textcolor{cblue}{\tuplefit{20}{20}{20}{10}{10}{10}}\)\\
\midrule
Decision
 & \(\textcolor{dpurple}{x=0}\)
 & \(\textcolor{dpurple}{x=0}\)
 & \(\textcolor{dpurple}{x=1}\)
 & \(\textcolor{dpurple}{x=1}\)\\
\specialrule{\lightrulewidth}{0.35em}{0pt}
sym.\ VCG & \(\tuplefit{-6}{-6}{-6}{a}{a}{a}\) & \(\tuplefit{-6}{-6}{10}{b}{b}{a}\) & \(\tuplefitcompact{-6}{b+4}{b+4}{c}{b+4}{b+4}\) & \(\tuplefit{20}{20}{20}{c}{c}{c}\)\\
VCG, \(h=0\) & \(\tuplefit{-6}{-6}{-6}{0}{0}{0}\) & \(\tuplefit{-6}{-6}{10}{0}{0}{0}\) & \(\tuplefitcompact{-6}{b+4}{b+4}{20}{4}{4}\) & \(\tuplefit{20}{20}{20}{20}{20}{20}\)\\
VCG\(^{\mathrm{piv}}\) & \(\tuplefit{-6}{-6}{-6}{0}{0}{0}\) & \(\tuplefit{-6}{-6}{10}{-4}{-4}{0}\) & \(\tuplefitcompact{-6}{b+4}{b+4}{0}{0}{0}\) & \(\tuplefit{20}{20}{20}{0}{0}{0}\)\\
VCG\(^{\mathrm{cen}}\) & \(\tuplefit{-6}{-6}{-6}{0}{0}{0}\) & \(\tuplefit{-6}{-6}{10}{-2}{-2}{0}\) & \(\tuplefitcompact{-6}{b+4}{b+4}{0}{2}{2}\) & \(\tuplefit{20}{20}{20}{0}{0}{0}\)\\
\rowcolor{llgreen}
GU-VCG & \(\tuplefit{-6}{-6}{-6}{3}{3}{3}\) & \(\tuplefit{-6}{-6}{10}{-4}{-4}{3}\) & \(\tuplefitcompact{-6}{b+4}{b+4}{1}{0}{0}\) & \(\tuplefit{20}{20}{20}{1}{1}{1}\)\\
\faintdottedrule
AGV & \(\tuplefit{-6}{-6}{-6}{0}{0}{0}\) & \(\tuplefit{-6}{-6}{10}{-1}{-1}{2}\) & \(\tuplefitcompact{-6}{b+4}{b+4}{-2}{1}{1}\) & \(\tuplefit{20}{20}{20}{0}{0}{0}\)\\
\faintdottedrule
sym.\ TU-GUM
 & \(\tuplefit{-6}{-6}{-6}{0}{0}{0}\) & \(\tuplefit{-6}{-6}{10}{-3}{-3}{6}\) & \(\tuplefitcompact{-6}{b+4}{b+4}{2}{-1}{-1}\) & \(\tuplefit{20}{20}{20}{0}{0}{0}\)\\
\bottomrule
\end{tabularx}
\end{table}

For GU-VCG, for each fixed profile of the other reports, a truthful agent's utility averaged over his own type is
\begin{align*}
 (10,10):&\quad \frac{(-6+1)+(10+1)}2=3\\
 (-6,10):&\quad \frac{(0-4)+(10+0)}2=3\\
 (-6,-6):&\quad \frac{(0+3)+(0+3)}2=3
\end{align*}
These calculations instantiate the restricted GUE result above.

For the mechanisms displayed in Table~\ref{tab:mechanism-summary-extended}, the corresponding own-type averages are:

\begin{table}[htbp]
\centering
\caption{Utility of a truthful agent, averaged over his own type, for fixed reports of the others.}
\label{tab:conditional-utilities-extended}
\small
\begin{tabularx}{\textwidth}{@{}>{\raggedright\arraybackslash}X *{3}{>{\centering\arraybackslash}X}@{}}
\toprule
Mechanism & \((10,10)\) & \((-6,10)\) or \((10,-6)\) & \((-6,-6)\)\\
\midrule
sym.\ VCG & \(c+2\) & \(b+7\) & \(a\)\\
VCG, \(h=0\) & 22 & 7 & 0\\
VCG\(^{\mathrm{piv}}\) & 2 & 3 & 0\\
VCG\(^{\mathrm{cen}}\) & 2 & 5 & 0\\
\rowcolor{llgreen}
GU-VCG & 3 & 3 & 3\\
\faintdottedrule
AGV & 1 & 5 & 1\\
\faintdottedrule
sym.\ TU-GUM & 3 & 3 & 3\\
\bottomrule
\end{tabularx}
\end{table}
\FloatBarrier

The identical GU-VCG and symmetrized TU-GUM rows reflect their common individual utility guarantees under the informational restriction above.
Unlike TU-GUM, GU-VCG is not budget-balanced; in the four columns of Table~\ref{tab:mechanism-summary-extended}, its transfer sums are \(9,-5,1,3\).

The informational restriction is essential.
If the grand coalition shares its realized types, it can have everyone report \(-6\) when at most one type is \(10\), and everyone report \(10\) otherwise.
Thus the same decision is made in every state as under truthful reporting, but instead of transfers with zero expected value, every agent receives a transfer of either \(1\) or \(3\).
The resulting expected total utility is
\[
 \frac18\,9+\frac38\,9+\frac38\,17+\frac18\,33=15
\]
whereas truthful reporting gives expected total utility \(9\), the sum of the three guarantees.
This example also demonstrates that the GUE property of GU-VCG, and hence the collusion-proofness implied by it, holds only when no agent can condition his report on another agent's realized type.

\section[The three-agent social choice example in the Project Management Mechanism]{The three-agent social choice example in the\\Project Management Mechanism}
\label{app:pmmstatic}

This appendix embeds the three-agent social choice example in the Project Management Mechanism and shows how a particular strategy of the principal reproduces the fixed-order TU-GUM.
The construction concerns the order \(1,2,3\), not the symmetrized payment rule.

The agents' initial types are fixed as part of the Project Management Mechanism specification.
The random variable denoted by \(\theta_i\in\{-6,10\}\) in the static formulation is represented here as an equiprobable private chance event realized before the reporting stage.
The contractible result spaces are
\[
 \mathcal R_0=\{0,1\}
 \qquad r_0=x
 \qquad \mathcal R_i=\{\ast\}\quad(i=1,2,3)
\]
so the social choice component of the result vector is precisely the decision \(x\).
The principal's payoff is zero, \(u_0(r)=0\), and, writing \(\theta_i\) for the realization recorded in \(\xi_i\), agent \(i\)'s payoff is
\[
 u_i(\xi_i,r)=r_0\theta_i=x\theta_i
\]

For this preference and protected expected utility level \(3\), the agent's truthful contract offer can be presented in a much simpler form.
Directly applying the definition of the truthful offer permits a richer set of protocols, including additional contractible messages before the private chance event is realized.
After discarding the options that are never worth choosing for the principal, the offer can be reduced to the following one-parameter family.
This is a reduction of the principal's relevant choices, not a pointwise characterization of all protocols admitted by the offer.

The contract protocol gives the principal a specified reporting window.
Immediately before requesting agent \(i\)'s report, she chooses and announces a scalar \(a\in\mathbb R\).
If the agent reports \(s\in\{-6,10\}\) and the mechanism selects \(r_0=x\), his transfer is
\begin{equation}
 t_i^a(-6,x)=a+6x
 \qquad
 t_i^a(10,x)=6-a-10x
 \label{eq:pmm-scalar-rule}
\end{equation}
Thus, in the order \((-6,0),(-6,1),(10,0),(10,1)\), the four transfers are
\[
 a,\qquad a+6,\qquad 6-a,\qquad -4-a
\]
The scalar announcement and the agent's report are contractible messages and are therefore part of \(m\).
The offer does not prescribe when within the reporting window the principal requests the report, which scalar she announces, or how the mechanism subsequently selects the social choice.

If the private chance event is \(\theta_i=s\) and the agent reports truthfully, then \eqref{eq:pmm-scalar-rule} gives
\[
 -6x+t_i^a(-6,x)=a
 \qquad
 10x+t_i^a(10,x)=6-a
\]
independently of the social choice.
Conditional on any preceding history, the two realizations remain equiprobable, so truthful reporting guarantees expected utility
\[
 1/2 \cdot a+1/2 \cdot (6-a)=3
\]
Thus every member of the scalar family is admitted by the truthful offer.

\Needspace{15\baselineskip}
To reproduce fixed-order TU-GUM, Table~\ref{tab:pmm-tu-gum-scalars} assigns a scalar \(a_i^h\) to each earlier-report history \(h=(\widehat\theta_1,\ldots,\widehat\theta_{i-1})\).
\begin{table}[h!]
\centering
\caption{Scalar announcements reproducing fixed-order TU-GUM.}
\label{tab:pmm-tu-gum-scalars}
\renewcommand{\arraystretch}{1.13}
\begin{tabular}{@{}c c c@{}}
\toprule
Agent & Earlier reports \(h\) & Announced scalar \(a_i^h\)\\
\midrule
1 & \historycenter{\(\varnothing\)} & \(\alignednum{-2}{.5}\)\\
\midrule
2 & \historycenter{\(-6\)} & \(\alignednum{-0}{.5}\)\\
2 & \historycenter{\(10\)} & \(\alignednum{-4}{.5}\)\\
\midrule
3 & \historyleft{\(\pairfit{-6}{-6}{-6}{-6}\)} & \(\alignednum{3}{}\)\\
3 & \historyleft{\(\pairfit{-6}{10}{-6}{10}\) or \(\pairfit{10}{-6}{10}{-6}\)} & \(\alignednum{-4}{}\)\\
3 & \historyleft{\(\pairfit{-6}{10}{10}{10}\)} & \(\alignednum{-5}{}\)\\
\bottomrule
\end{tabular}
\end{table}
\FloatBarrier
At each such history, the principal announces the listed scalar immediately before requesting agent \(i\)'s report.
The announcement is fixed and cannot subsequently be revised.
After all reports, the mechanism selects, under this strategy,
\[
 r_0=x=\chi(\widehat\theta)
\]
and pays \(t_i^{a_i^h}(\widehat\theta_i,x)\) to agent \(i\).

For example, after \(\widehat\theta_1=-6\), the principal announces \(a_2^{(-6)}=-0.5\).
If agent 2 reports \(10\), his transfer is \(6.5\) when \(r_0=0\), and \(-3.5\) when \(r_0=1\).
These are exactly his fixed-order TU-GUM transfers at \((-6,10,-6)\) and \((-6,10,10)\), respectively.
Checking the other rows in the same way reproduces the fixed-order transfer vector \(y^{\GUM}\) at every report profile.

The maximin argument applies to the full truthful offers, not only to their scalar reductions.
By the definition of a truthful offer, each agent has a truthful strategy guaranteeing expected utility at least \(3\), whatever the principal and the other agents do.
If all three agents use such strategies, their expected utilities sum to at least \(9\) against any strategy of the principal.
Because transfers cancel in total utility and the expected total payoff cannot exceed \(9\), the principal's expected utility against this response is at most \(0\).
Hence no strategy of the principal can guarantee her more than \(0\).
The scalar strategy above reproduces the budget-balanced fixed-order TU-GUM, so the principal receives utility \(0\) at every report profile.
It is therefore a maximin strategy even when the full truthful offers are available, and truthful reporting generates the efficient TU-GUM outcome.

The symmetrized rule is not represented by this sequential scalar construction: an early agent's transfer may depend on later reports even when his own report and the social choice are held fixed.


\begin{thebibliography}{99}
\small
\setlength{\itemsep}{-0.06em}
\setlength{\parskip}{0pt}

\bibitem{Arrow1979}
Kenneth J. Arrow.
\newblock The property rights doctrine and demand revelation under incomplete
information.
\newblock In Michael J. Boskin, editor, \emph{Economics and Human Welfare:
Essays in Honor of Tibor Scitovsky}, pages 23--39.
Academic Press, New York, 1979.

\bibitem{AtheySegal2013}
Susan Athey and Ilya Segal.
\newblock An efficient dynamic mechanism.
\newblock \emph{Econometrica}, 81(6):2463--2485, 2013.

\bibitem{BergemannValimaki2010}
Dirk Bergemann and Juuso V\"alim\"aki.
\newblock The dynamic pivot mechanism.
\newblock \emph{Econometrica}, 78(2):771--789, 2010.

\bibitem{Clarke1971}
Edward H. Clarke.
\newblock Multipart pricing of public goods.
\newblock \emph{Public Choice}, 11:17--33, 1971.

\bibitem{Csoka2015}
Endre Cs\'oka.
\newblock Efficient teamwork.
\newblock \href{https://arxiv.org/abs/cs/0602009}{arXiv:cs/0602009}.
First version: 2006; latest version: 2018.

\bibitem{Csoka2021}
Endre Cs\'oka.
\newblock A robust efficient dynamic mechanism.
\newblock arXiv:2110.15219, 2021.

\bibitem{CsokaPriorFree2025}
Endre Cs\'oka.
\newblock Prior-free collusion-proof dynamic mechanisms.
\newblock arXiv:2511.15727, 2025; revised July 2026.

\bibitem{CsokaPongraczRodivilov2025}
Endre Cs\'oka, Andr\'as Pongr\'acz, and Alexander Rodivilov.
\newblock Guaranteed utility equilibrium.
\newblock Forthcoming in the \emph{Proceedings of the 27th ACM Conference on Economics and Computation (EC '26)}, 2026.

\bibitem{CsokaEtAl2026}
Endre Cs\'oka, Heng Liu, Alexander Rodivilov, and Alexander Teytelboym.
\newblock Efficiency and collusion-proofness in dynamic mechanism design.
\newblock Working paper, July 31, 2026.

\bibitem{DAspremontGerardVaret1979}
Claude d'Aspremont and Louis-Andr\'e G\'erard-Varet.
\newblock Incentives and incomplete information.
\newblock \emph{Journal of Public Economics}, 11(1):25--45, 1979.

\bibitem{GreenLaffont1977}
Jerry R. Green and Jean-Jacques Laffont.
\newblock Characterization of satisfactory mechanisms for the revelation of preferences for public goods.
\newblock \emph{Econometrica}, 45(2):427--438, 1977.

\bibitem{GreenLaffont1979}
Jerry R. Green and Jean-Jacques Laffont.
\newblock \emph{Incentives in Public Decision-Making}.
\newblock North-Holland, Amsterdam, 1979.

\bibitem{Groves1973}
Theodore Groves.
\newblock Incentives in teams.
\newblock \emph{Econometrica}, 41(4):617--631, 1973.

\bibitem{Myerson1979}
Roger B. Myerson.
\newblock Incentive compatibility and the bargaining problem.
\newblock \emph{Econometrica}, 47(1):61--73, 1979.

\bibitem{Vickrey1961}
William Vickrey.
\newblock Counterspeculation, auctions, and competitive sealed tenders.
\newblock \emph{The Journal of Finance}, 16(1):8--37, 1961.

\bibitem{Zik2021}
Boaz Zik.
\newblock Ex-post implementation with social preferences.
\newblock \emph{Social Choice and Welfare}, 56(3):467--485, 2021.

\end{thebibliography}
\end{document}